\long\def\symbolfootnote[#1]#2{\begingroup
\def\thefootnote{\fnsymbol{footnote}}\footnote[#1]{#2}\endgroup}
\numberwithin{equation}{section}
\newenvironment{proofLemma1}[1][Proof of Lemma \ref{dispersionLem}.]{\begin{trivlist}
\item[\hskip \labelsep {\bfseries #1}]}{\hfill\qed\end{trivlist}}
\newenvironment{proofLemma2}[1][Proof of Lemma \ref{alphaLem}.]{\begin{trivlist}
\item[\hskip \labelsep {\bfseries #1}]}{\hfill\qed\end{trivlist}}
\newenvironment{proofThm1}[1][Proof of Theorem \ref{relValueThm}.]{\begin{trivlist}
\item[\hskip \labelsep {\bfseries #1}]}{\hfill\qed\end{trivlist}}
\theoremstyle{plain}
\renewcommand{\baselinestretch}{1.3}
\theoremstyle{plain}
\newtheorem{thm}{Theorem}[section]
\newtheorem{lem}[thm]{Lemma}
\newtheorem{cor}[thm]{Corollary}
\theoremstyle{definition}
\newtheorem{defn}[thm]{Definition}
\def\ito{It{\^o}}
\def\R{{\mathbb R}}
\def\F{\CMcal{F}}
\def\Var{\operatorname{Var}}
\def\intT{\int_0^T}
\def\O{\Omega}
\def\D{\Delta}
\def\a{\alpha}
\def\b{\beta}
\def\l{\lambda}
\def\g{\gamma}
\begin{document}

\centerline{\bf \LARGE{Asset Price Distributions and Efficient Markets} }

\vskip 10pt




\vskip 5pt

\vskip 25pt
\centerline{\large{\hskip 10pt Ricardo T. Fernholz\symbolfootnote[1]{Robert Day School of Economics and Finance, Claremont McKenna College, 500 E. Ninth St., Claremont, CA 91711, rfernholz@cmc.edu} \hskip 95pt Caleb Stroup\symbolfootnote[2]{Department of Economics, Davidson College, 405 N. Main St., Davidson, NC 28035, castroup@davidson.edu}}}

\vskip 3pt

\centerline{\hskip 0pt Claremont McKenna College \hskip 85pt Davidson College}

\vskip 35pt

\centerline{\large{\today}}

\vskip 5pt


\vskip 50pt

\renewcommand{\baselinestretch}{1.1}
\begin{abstract}
We explore a decomposition in which returns on a large class of portfolios relative to the market depend on a smooth non-negative drift and changes in the asset price distribution. This decomposition is obtained using general continuous semimartingale price representations, and is thus consistent with virtually any asset pricing model. Fluctuations in portfolio relative returns depend on stochastic time-varying dispersion in asset prices. Thus, our framework uncovers an asset pricing factor whose existence emerges from an accounting identity universal across different economic and financial environments, a fact that has deep implications for market efficiency. In particular, in a closed, dividend-free market in which asset price dispersion is relatively constant, a large class of portfolios must necessarily outperform the market portfolio over time. We show that price dispersion in commodity futures markets has increased only slightly, and confirm the existence of substantial excess returns that co-vary with changes in price dispersion as predicted by our theory.
\end{abstract}
\renewcommand{\baselinestretch}{1.3}

\vskip 50pt

JEL Codes: G10, G11, G12, C14

Keywords: Asset pricing, asset returns, asset pricing factors, price distributions, efficient markets, statistical methods

\vfill
\pagebreak

\section{Introduction} \label{sec:intro}

We explore the implications of two simple insights: First, a change in one asset's price relative to all other assets' prices must cause the distribution of relative asset prices to change. Second, an increase in the price of a relatively high-priced asset increases the dispersion of the asset price distribution. These two facts amount to accounting identities, and we show that they link the performance of a wide class of portfolios to the dynamic behavior of asset price dispersion.

To formalize and explore the relationship between asset price distributions, portfolio returns, and efficient markets, we represent asset prices as continuous semimartingales and show that returns on a large class of portfolios (relative to the market) can be decomposed into a drift and changes in the dispersion of asset prices:
\begin{equation} \label{intuitiveEq}
 \text{relative return} \;\; = \;\; \text{drift} \; - \; \text{change in asset price dispersion},
\end{equation}
where the drift is non-negative and roughly constant over time, asset price dispersion is volatile over time. Fluctuations in asset price dispersion thus drive fluctuations in portfolio returns relative to the market. The decomposition \eqref{intuitiveEq} is achieved using few assumptions about the dynamics of individual asset prices, which means our results are sufficiently general that they apply to almost any equilibrium asset pricing model, in a sense made precise in Section 2 below. Indeed, the decomposition \eqref{intuitiveEq} is little more than an accounting identity that is approximate in discrete time and exact in continuous time.

By tying the volatility of relative portfolio returns to changes in asset price dispersion, we characterize a class of asset pricing factors that are universal across different economic models and econometric specifications. Our results thus formally address concerns about the implausibly high number of factors and anomalies uncovered by the empirical asset pricing literature \citep{Novy-Marx:2014,Harvey/Liu/Zhu:2016,Bryzgalova:2016}. Indeed, the decomposition \eqref{intuitiveEq} provides a novel workaround to such criticisms, since its existence need not be rationalized by a particular equilibrium asset pricing model.

This generality is one of the strengths of our unconventional approach. The continuous semimartingales we use to represent asset prices allow for a practically unrestricted structure of time-varying dynamics and co-movements that are consistent with the endogenous price dynamics of any economic model. In this manner, our framework is applicable to both rational \citep{Sharpe:1964,Lucas:1978,Cochrane:2005} and behavioral \citep{Shiller:1981,DeBondt/Thaler:1989} theories of asset prices, as well as to the many econometric specifications of asset pricing factors identified in the empirical literature.

Our results also have implications for the efficiency of asset markets. We show that in a market in which dividends and the entry and exit of assets over time play small roles, non-negativity of the drift component of \eqref{intuitiveEq} implies that a wide class of portfolios must necessarily outperform the market except in the special case where asset price dispersion increases at a sufficiently fast rate over time. Thus, in order to rule out predictable excess returns, asset price dispersion must be increasing on average over time at a rate sufficiently fast to overwhelm the predictable positive drift. This result re-casts market efficiency in terms of a constraint on the dynamic behavior of asset price distributions. Through this lens, our approach provides a novel mechanism to uncover risk factors or inefficiencies that persist across a variety of different asset markets.

We test our theoretical predictions using commodity futures. This market provides a clear test of our theory, since commodity futures contracts do not pay dividends and rarely exit from the market. Although dividends and asset entry/exit can be incorporated into our framework and do not overturn the basic insight of the decomposition \eqref{intuitiveEq}, they do alter and complicate the form of our results. Furthermore, some of our results have been applied, albeit with a different interpretation, to equity markets \citep{Vervuurt/Karatzas:2015}, so the focus on commodity futures provides a completely novel set of empirical results that best aligns with our theoretical results.

In the decomposition \eqref{intuitiveEq}, asset price dispersion is any convex and symmetric function of relative asset prices, where each such function admits the decomposition for a specific portfolio. Our empirical analysis focuses on two special case measures of price dispersion, minus the geometric mean and minus the constant-elasticity-of-substitution (CES) function, and their associated equal- and CES-weighted portfolios. We decompose the relative returns of the equal- and CES-weighted commodity futures portfolios from 1974-2018 as in \eqref{intuitiveEq}, with the market portfolio defined as the price-weighted portfolio that holds one unit of each commodity futures contract. Empirically, we show that measures of commodity futures price dispersion increased only slightly over the forty year period we study. Consequently, and as predicted by the theory, the CES and equal-weighted portfolios exhibit positive long-run returns relative to the market driven by the accumulating positive drift. These portfolios consistently and substantially outperform the price-weighted market portfolio of commodity futures, with excess returns that have Sharpe ratios of 0.7-0.8 in most decades.


It is important to emphasize that the mathematical methods we use to derive our results are well-established and the subject of active research in statistics and mathematical finance. Our results and methods are most similar to \citet{Karatzas/Ruf:2017}, who provide a return decomposition similar to \eqref{intuitiveEq}. Their results are based on the original characterization of \citet{Fernholz:2002}, which led to subsequent contributions by \citet{Fernholz/Karatzas:2005}, \citet{Vervuurt/Karatzas:2015}, and \citet{Pal/Wong:2016}, among others. These contributions focus primarily on the solutions of stochastic differential equations and the mathematical conditions under which different types of arbitrage do or do not exist. Our results, in contrast, are expressed in terms of the distribution of asset prices and interpreted in an economic setting that focuses on questions of asset pricing risk factors and market efficiency. Ours is also the first paper to empirically examine these results in the commodity futures market, which, as discussed above, most closely aligns with the assumptions that underlie our theoretical results.

Our results raise the possibility of a unified interpretation of different asset pricing anomalies and risk factors in terms of the dynamics of asset price distributions. For example, the value anomaly for commodities uncovered by \citet{Asness/Moskowitz/Pedersen:2013} is similar in construction to the equal- and CES-weighted portfolios we study. Our theoretical results link these excess returns to fluctuations in commodity futures price dispersion and the approximate stability of this dispersion over long time periods. Therefore, any attempt to explain these excess returns or the related value anomaly for commodity futures of \citet{Asness/Moskowitz/Pedersen:2013} must also explain the dynamics of commodity price dispersion.


The same conclusion applies to the surprising finding of \citet{DeMiguel/Garlappi/Uppal:2009} that a naive strategy of weighting each asset equally --- $1/N$ diversification --- outperforms a variety of portfolio diversification strategies including a value-weighted market portfolio based on CAPM. The relative return decomposition \eqref{intuitiveEq} implies that such outperformance is likely a consequence of the approximate stability of asset price dispersion over time, given the positive drift. As with the value anomaly for commodities, then, any attempt to explain the relative performance of naive $1/N$ diversification must also explain the dynamics of asset price dispersion.

Our results also raise questions regarding the implications of equilibrium asset pricing models for price dispersion. Since our results show that asset price dispersion operates as a universal risk factor, different models' predictions for this dispersion become a major question of interest. In particular, our results imply that price dispersion should be linked to an endogenous stochastic discount factor that in equilibrium is linked to the marginal utility of economic agents. It is not obvious what economic and financial forces might underlie such a link, however. Nonetheless, our paper shows that unless asset price dispersion is consistently and rapidly rising over time, such links must necessarily exist.

\vskip 50pt



\section{Theory} \label{sec:theory}

In this section we ask what, if anything, can be learned about portfolio returns from information about the evolution of individual asset prices relative to each other. We do this by characterizing the close relationship between the distribution of relative asset prices and the returns for a large class of portfolios relative to the market. Importantly, our characterization is sufficiently broad as to nest virtually all equilibrium asset pricing theories, meaning that our results require no commitment to specific models of trading behavior, agent beliefs, or market microstructure.


\subsection{Setup and Discussion} \label{sec:setup}
Consider a market that consists of $N > 1$ assets. Time is continuous and denoted by $t$ and uncertainty in this market is represented by a probability space $(\O, \F, P)$ that is endowed with a right-continuous filtration $\{\F_t ; t \geq 0\}$. Each asset price $p_i$, $i = 1, \ldots, N$, is characterized by a positive continuous semimartingale that is adapted to $\{\F_t ; t \geq 0\}$, so that
\begin{equation} \label{contSemimart}
 p_i(t) = p_i(0) + g_i(t) + v_i(t),
\end{equation}
where $g_i$ is a continuous process of finite variation, $v_i$ is a continuous, square-integrable local martingale, and $p_i(0)$ is the initial price. The semimartingale representation \eqref{contSemimart} decomposes asset price dynamics into a time-varying cumulative growth component, $g_i(t)$, whose total variation is finite over every interval $[0, T]$, and a randomly fluctuating local martingale component, $v_i(t)$. By representing asset prices as continuous semimartingales, we are able to impose almost no structure on the underlying economic environment.

For any continuous semimartingales $x, y$, let $\langle x, y \rangle$ denote the cross variation of these processes and $\langle x \rangle = \langle x, x \rangle$ denote the quadratic variation of $x$. Since the continuous semimartingale decomposition \eqref{contSemimart} is unique and the finite variation processes $g_i$ and $g_j$ all have zero cross-variation \citep{Karatzas/Shreve:1991}, it follows that
\begin{equation} \label{crossVariation}
 \langle p_i, p_j \rangle (t) = \langle v_i, v_j \rangle (t),
\end{equation}
for all $i, j = 1, \ldots, N$ and all $t$. The cross-variation processes $\langle p_i, p_j \rangle$ measure the cumulative covariance between asset prices $p_i$ and $p_j$, and thus the differentials of these processes, $d \langle p_i, p_j \rangle (t)$, measure the instantaneous covariance between $p_i$ and $p_j$ at time $t$. Similarly, the quadratic variation processes $\langle p_i \rangle$ measure the cumulative variance of the asset price $p_i$, and thus the differential of that process, $d \langle p_i \rangle (t)$, measures the instantaneous variance of $p_i$.



Our approach in this paper is unconventional in that we do not impose a specific model of asset pricing. Instead, we derive results in a very general setting, with the understanding that the minimal assumptions behind these results mean that they will be consistent with almost any underlying economic model. Indeed, essentially any asset price dynamics generated endogenously by a model can be represented as general continuous semimartingales of the form \eqref{contSemimart}. This generality is crucial, since we wish to provide results that apply to all economic and financial environments.

Before proceeding, we pause and ask what assumptions the framework \eqref{contSemimart} relies on and how these assumptions relate to other asset pricing theories. A first important assumption is that assets do not pay dividends, so that returns are driven entirely by capital gains via price changes. In this sense, we can think of the $N$ assets in the market as rolled-over futures contracts that guarantee delivery of some underlying real asset on a future date. We emphasize that this assumption is for simplicity only. Our results can easily be extended to include general continuous semimartingale dividend processes similar to \eqref{contSemimart}. Including such dividend processes complicates the theory but does not change the basic insight of our results, a point that we discuss further below.

Second, we consider a closed market in which there is no asset entry or exit over time. In other words, we assume that the $N$ assets in the market are unchanged over time. As with dividends, our basic framework can be extended to include asset entry and exit using local time processes that measure the intensity of crossovers in rank (see, for example, \citet{Fernholz:2017a}). In such an extension, only the top $N$ assets in the market at a given moment in time are considered, and there is a local time process that measures the impact of entry and exit into and out of that top $N$, just like in the framework of \citet{Fernholz/Fernholz:2018}. For simplicity, we do not include asset entry and exit and the requisite local time processes in our theoretical analysis. We do, however, discuss how such entry and exit might impact our results. Furthermore, for our empirical analysis in Section \ref{sec:empirics} we consider commodity futures contracts in which asset exit --- the more significant omission --- does not occur over our sample period, thus aligning our empirical analysis as closely as possible with the theoretical assumptions of no dividends and no entry or exit.

In addition, we emphasize that our assumption that the continuous semimartingale price processes \eqref{contSemimart} are positive is only for simplicity and can be relaxed. Indeed, \citet{Karatzas/Ruf:2017} show how many of our theoretical results can be extended to a market in which zero prices are possible. Since zero prices are essentially equivalent to asset exit, this extension of our results provides another example of how the exit of assets from the market can be incorporated into our framework without overturning the basic insight of our results.

The last assumption behind \eqref{contSemimart} is that the prices $p_i$ are continuous functions of time $t$ that are adapted to the filtration $\{\F_t ; t \geq 0\}$. The assumption of continuity is essential for mathematical tractability, since we rely on stochastic differential equations whose solutions are readily obtainable in continuous time to derive our theoretical results. Given the generality of our setup, it is difficult to see how introducing instantaneous jumps into the asset price dynamics \eqref{contSemimart} would meaningfully alter our conclusions. Nonetheless, it is important and reassuring that in Section \ref{sec:empirics} we confirm the validity of our continuous-time results using monthly, discrete-time asset price data. This is not surprising, however, since an instantaneous price jump is indistinguishable from a rapid but continuous price change --- this is allowed according to \eqref{contSemimart} --- using discrete-time data. Finally, the assumption that asset prices $p_i$ are adapted means only that they cannot depend on the future. This reflects the reality that agents are not clairvoyant, and cannot relay information about the future realization of stochastic processes to the present.

The decomposition \eqref{contSemimart} separates asset price dynamics into two distinct parts. The first, the finite variation process $g_i$, has an instantaneous variance of zero (zero quadratic variation) and measures the cumulative growth in price over time. Despite its finite variation, the cumulative growth process $g_i$ can constantly change depending on economic and financial conditions as well as other factors, including the prices of the different assets. In Section \ref{sec:generalPort}, we show that our main relative return decomposition result consists of a finite variation process as well. In the subsequent empirical analysis Section \ref{sec:empirics}, we construct this finite variation process using discrete-time asset price data and show a clear contrast between its time-series behavior and the behavior of processes with positive quadratic variation (instantaneous variance greater than zero). 

The second part of the decomposition \eqref{contSemimart} consists of the square-integrable local martingale $v_i$. In general, this process has a positive instantaneous variance (positive quadratic variation), and thus its fluctuations are much larger and faster than for the finite variation cumulative growth process $g_i$. Note that a local martingale is more general than a martingale \citep{Karatzas/Shreve:1991}, and thus includes an extremely broad class of continuous stochastic processes. Intuitively, the process $v_i$ can be thought of as a random walk with a variance that can constantly change depending on economic and financial conditions as well as other factors. Furthermore, we allow for a rich structure of potentially time-varying covariances among the local martingale components $v_i$ of different asset prices, which are measured by the cross-variation processes \eqref{crossVariation}.

The commodity futures market we apply our theoretical results to in Section \ref{sec:empirics} offers one of the cleanest applications of our theory, since commodities rarely exit the market and their futures contracts do not pay dividends. A number of studies have decomposed commodity futures prices into risk premia and forecasts of future spot prices \citep{Fama/French:1987,Chinn/Coibion:2014}. Commodity spot prices, which are a major determinant of futures prices, have in turn been linked to storage costs and fluctuations in supply and demand \citep{Brennan:1958,Alquist/Coibion:2014}. In the context of this literature, there are many potential mappings from the fundamental economic and financial forces that determine spot and futures commodity prices to the general continuous semimartingale representation of asset prices \eqref{contSemimart}. Indeed, higher storage costs, increases in demand, rising risk premia, and many other factors can be represented as increases in the cumulative growth process $g_i$. Similarly, all of the unpredictable random shocks that impact commodity markets can be represented as changes in the local martingale $v_i$.

The crucial point, however, is that all of these models and the different economic and financial factors that they emphasize are consistent with the reduced form representation of asset prices \eqref{contSemimart}. After all, any model that proposes an explanation for the growth and volatility of commodity prices can be translated into our setup. The advantage of \eqref{contSemimart} is that we need not commit to any particular model of asset pricing, thus allowing us to derive results that are consistent across all the different models.



\subsection{Portfolio Strategies} \label{sec:port}
A \emph{portfolio strategy} $s(t) = (s_1(t), \ldots, s_N(t))$ specifies the number of shares of each asset $i = 1, \ldots, N$ that are to be held at time $t$. The shares $s_1, \ldots, s_N$ that make up a portfolio strategy must be measurable, adapted, and non-negative.\footnote{The assumption that portfolios hold only non-negative shares of each asset, and hence do not hold short positions, is only for simplicity. Our theory and results can be extended to long-short portfolios as well.} The \emph{value} of a portfolio strategy $s$ is denoted by $V_s > 0$, and satisfies
\begin{equation} \label{valueEq}
 V_s(t) = \sum_{i=1}^N s_i(t)p_i(t),
\end{equation}
for all $t$.

It is sometimes also useful to describe portfolio strategies $s$ in terms of \emph{weights}, denoted by $w^s(t) = (w^s_1(t), \ldots, w^s_N(t))$, which measure the fraction of portfolio $s$ invested in each asset. The shares of each asset held by a portfolio strategy, $s_i$, are easily linked to the weights of that portfolio strategy, $w^s_i$. In particular, a portfolio strategy $s(t) = (s_1(t), \ldots, s_N(t))$ has weights equal to
\begin{equation} \label{weightsEq}
 w^s_i(t) = \frac{p_i(t)s_i(t)}{V_s(t)},
\end{equation}
for all $i = 1, \ldots, N$ and all $t$, since \eqref{weightsEq} is equal to the dollar value invested in asset $i$ by portfolio $s$ divided by the dollar value of portfolio $s$. It is easy to confirm using \eqref{valueEq} and \eqref{weightsEq} that the weights $w^s_i$ sum up to one.

We require that all portfolios satisfy the self-financibility constraint, which ensures that gains or losses from the portfolio strategy $s$ account for all changes in the value of the investment over time. This implies that
\begin{equation} \label{selfFinanceEq}
 V_s(t) - V_s(0) = \int_0^t \sum_{i=1}^N s_i(t)\,dp_i(t),
\end{equation}
for all $t$. In addition, in order to permit comparisons on an even playing field, we set the initial holdings for all portfolios equal to each other. Without loss of generality, we set this initial value equal to the combined initial price of all assets in the economy, so that
\begin{equation} \label{valueNormalizationEq}
 V_s(0) = \sum_{i=1}^N p_i(0),
\end{equation}
for all portfolio strategies $s$.

One simple example of a portfolio strategy that will play a central role in much of our theoretical and empirical analysis is the \emph{market portfolio strategy}, which we denote by $m$. The market portfolio $m$ holds one share of each asset, so that $m(t) = (1, \ldots, 1)$ for all $t$. Following \eqref{valueEq}, we have that the value of the market portfolio strategy, $V_m$, is given by
\begin{equation} \label{marketValueEq}
  V_m(t) = \sum_{i=1}^N m_i(t)p_i(t) = \sum_{i=1}^N p_i(t),
\end{equation}
for all $t$. Note that the market portfolio satisfies the self-financibility constraint, since
\begin{equation}
 \int_0^t \sum_{i=1}^N m_i(t)\,dp_i(t) = \int_0^t \sum_{i=1}^N \,dp_i(t) = \sum_{i=1}^N p_i(t) - \sum_{i=1}^N p_i(0) = V_m(t) - V_m(0),
\end{equation}
for all $t$. It also satisfies the initial condition \eqref{valueNormalizationEq}, as shown by evaluating \eqref{marketValueEq} at $t = 0$.

Our definition of a portfolio strategy is very broad and includes many strategies that would be difficult or costly to implement in the real world. This broadness is intentional, as it helps to showcase the generality and power of our theoretical results. Indeed, one of our main contributions is to show that the returns for a large class of portfolios can be characterized parsimoniously under almost no assumptions about the underlying dynamics of asset prices. Once we have established this decomposition for general portfolio strategies, we will turn to specific examples to explain and highlight our results.


\subsection{The Distribution of Asset Prices} \label{sec:distribution}
The distribution of asset prices in our framework can be described in a simple way as a function of relative prices. Let $\theta = (\theta_1, \ldots, \theta_N)$, where each $\theta_i$, $i = 1, \ldots, N$, is given by
\begin{equation} \label{relPricesEq}
 \theta_i(t) = \frac{p_i(t)}{\sum_{i=1}^N p_i(t)}.
\end{equation}
Because the continuous semimartingales $p_i$ are all positive by assumption, it follows that $0 < \theta_i < 1$, for all $i = 1, \ldots, N$. By construction, we also have that $\theta_1 + \cdots + \theta_N = 1$. We denote the range of the relative price vector $\theta = (\theta_1, \ldots, \theta_N)$ by $\D$, so that
\begin{equation} \label{delta}
 \D = \left\{ (\theta_1, \ldots, \theta_N) \in (0, 1)^N : \sum_{i=1}^N \theta_i = 1 \right\}.
\end{equation}

Note that the market portfolio strategy $m$, which is defined as holding one share of each asset at all times, has weights equal to the relative price vector $\theta$. This is an immediate consequence of \eqref{weightsEq} and \eqref{marketValueEq}, which together imply that
\begin{equation} \label{marketWeights}
 w^m_i(t) = \frac{p_i(t)}{V_m(t)} = \frac{p_i(t)}{\sum_{i=1}^N p_i(t)} = \theta_i(t),
\end{equation}
for all $i = 1, \ldots, N$ and all $t$.

The portfolio strategies we characterize are constructed using measures of the dispersion of the asset price distribution. We demonstrate that the returns on these portfolios relative to the market portfolio depend crucially on changes in this asset price dispersion. The following definition makes dispersion of the asset price distribution a precise concept.

\begin{defn} \label{dispersionDef}
A twice continuously differentiable function $F : \D \to \R$ is a \emph{measure of price dispersion} if it is convex and invariant under permutations of the relative asset prices $\theta_1, \ldots, \theta_N$.
\end{defn}

We say that asset prices are more (less) dispersed as a measure of price dispersion $F$ increases (decreases). The following lemma explains why Definition \ref{dispersionDef}, which is the convex analogue of the diversity measure from \citet{Fernholz:2002}, accurately captures the concept of asset price dispersion.

\begin{lem} \label{dispersionLem}
Let $F$ be a measure of price dispersion and $\theta, \theta' \in \D$. Suppose that
\begin{equation}
 \max(\theta) = \max(\theta_1, \ldots, \theta_N) > \max(\theta'_1, \ldots, \theta'_N) = \max(\theta'),
\end{equation}
and that $\theta_i = \theta'_i$ for all $i$ in some subset of $\{1, \ldots, N\}$ that contains $N - 2$ elements. Then $F(\theta) \geq F(\theta')$. Furthermore, if $F$ is strictly convex, then $F(\theta) > F(\theta')$.
\end{lem}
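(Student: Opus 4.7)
The plan is to reduce the lemma to a two-variable statement and then exploit the symmetry-plus-convexity of $F$ along a single affine direction. First I would identify the two indices $j, k$ lying outside the $(N-2)$-element agreement set. Since $\theta, \theta' \in \D$ and they agree on the other $N-2$ coordinates, both pairs sum to the same value, $\theta_j + \theta_k = \theta'_j + \theta'_k =: s$. Using permutation invariance of $F$, I relabel within each vector so that $\theta_j \geq \theta_k$ and $\theta'_j \geq \theta'_k$. Let $M$ denote the maximum of the $N-2$ common coordinates. Then $\max(\theta) = \max(M, \theta_j)$ and $\max(\theta') = \max(M, \theta'_j)$, and the hypothesis $\max(\theta) > \max(\theta')$ rules out $\theta_j \leq M$ (otherwise $\max(\theta) = M \leq \max(\theta')$), so $\theta_j > \max(M, \theta'_j) \geq \theta'_j$. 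The sum constraint then gives $\theta_k < \theta'_k$.

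Next I would parametrize the spread between each pair around its mean $s/2$. Writing $\theta_j = s/2 + t_1$, $\theta_k = s/2 - t_1$ and $\theta'_j = s/2 + t_2$, $\theta'_k = s/2 - t_2$, the previous step yields $0 \leq t_2 < t_1 < s/2$. Define
\[
 \phi(t) := F(\theta_1, \ldots, s/2 + t, \ldots, s/2 - t, \ldots, \theta_N),
\]
where the $N-2$ common coordinates are held fixed, $s/2 + t$ sits in position $j$, and $s/2 - t$ sits in position $k$. Because $t \mapsto \theta$ is an affine embedding, $\phi$ inherits convexity from $F$; because transposing positions $j$ and $k$ is a permutation under which $F$ is invariant, $\phi(-t) = \phi(t)$.

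Finally, I would invoke the elementary fact that a convex even function is non-decreasing in $|t|$: writing $t_2 = \lambda t_1 + (1-\lambda)(-t_1)$ with $\lambda = (t_1 + t_2)/(2t_1) \in [1/2, 1)$, convexity gives
\[
 \phi(t_2) \leq \lambda \phi(t_1) + (1-\lambda)\phi(-t_1) = \phi(t_1),
\]
which is exactly $F(\theta') \leq F(\theta)$. When $F$ is strictly convex, the restriction $\phi$ is strictly convex since the line $t \mapsto (s/2+t, s/2-t)$ has nonzero direction in the $(j,k)$-plane, so the displayed inequality is strict as soon as $t_1 > t_2 \geq 0$ and $t_1 > 0$, both of which hold here.

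The main obstacle is really just the bookkeeping in the first step, i.e., confirming that the hypothesis $\max(\theta) > \max(\theta')$ together with the sum constraint genuinely spreads $(\theta_j, \theta_k)$ strictly farther from the midpoint $s/2$ than $(\theta'_j, \theta'_k)$. Once that is pinned down, the rest is the standard ``symmetric plus convex implies Schur-convex'' argument applied along the transposition direction $e_j - e_k$.
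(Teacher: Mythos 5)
Your proof is correct and is essentially the paper's own argument in different clothing: your convex combination $t_2=\lambda t_1+(1-\lambda)(-t_1)$ along the transposition direction is exactly the paper's identity $\theta'=\beta\theta+(1-\beta)\tilde\theta$ with $\tilde\theta$ the swap of the two differing coordinates and $\beta=\lambda$. Your opening bookkeeping (showing the max of $\theta$ must sit at one of the two differing indices, so the pair genuinely spreads farther from $s/2$) is if anything slightly more careful than the paper's ``without loss of generality.''
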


To see how Lemma \ref{dispersionLem} explains the validity of Definition \ref{dispersionDef}, let us consider two relative price vectors $\theta, \theta' \in \D$. Suppose that the maximum relative price for $\theta$ is greater than for $\theta'$, while all other relative prices but one are equal to each other.\footnote{Note that if $\max(\theta) > \max(\theta')$, then it must be that $\theta_i \neq \theta'_i$ for at least two indexes $i = 1, \ldots, N$.} In this case, the relative price vector $\theta$ is more dispersed than $\theta'$, since these two are equal except that $\theta$ has a higher maximum price than $\theta'$. According to Lemma \ref{dispersionLem}, any measure of price dispersion $F$ will be weakly greater for $\theta$ than for $\theta'$ in this case, thus demonstrating that $F$ is weakly increasing in asset price dispersion. By a similar logic, the lemma also establishes that any strictly convex measure of price dispersion $F$ is strictly increasing in asset price dispersion.


We wish to consider two specific measures of price dispersion, both of which play a crucial role in forming portfolios for our empirical analysis. The first measure is based on the \emph{geometric mean function} $G : \D \to [0, \infty)$, defined by
\begin{equation} \label{geometricMeanEq}
 G(\theta(t)) = \left( \theta_1(t)\cdots\theta_N(t) \right)^{1/N}.
\end{equation}
Because the geometric mean function is concave, the function $-G < 0$ is a measure of price dispersion according to Definition \ref{dispersionDef}. The second measure of price dispersion is based on the \emph{constant elasticity of substitution (CES) function} $U : \D \to [0, \infty)$, defined by
\begin{equation} \label{cesEq}
  U(\theta(t)) = \left( \sum_{i=1}^N \theta^{\g}_i(t) \right)^{1/\g},
\end{equation}
where $\g$ is a nonzero constant, for all $i = 1, \ldots, N$.\footnote{For simplicity, we rule out the case where $\g = 0$ and $U$ becomes a Cobb-Douglas function.} As with the geometric mean function, the CES function is also concave, and hence the function $-U < 0$ is a measure of price dispersion according to Definition \ref{dispersionDef}.

The portfolio strategies that we construct using measures of price dispersion have relative returns that can be decomposed into changes in asset price dispersion and a non-negative drift process. This drift process is defined in terms of an associated measure of price dispersion. For any measure of price dispersion $F$ and any $i, j = 1, \ldots, N$, let $F_i$ denote the partial derivative of $F$ with respect to $\theta_i$, $\frac{\partial F}{\partial \theta_i}$,  let $F_{ij}$ denote the partial derivative of $F$ with respect to $\theta_i$ and $\theta_j$, $\frac{\partial^2 F}{\partial \theta_i \partial \theta_j}$, and let $H_F = (F_{ij})_{1 \leq i,j \leq N}$ denote the Hessian matrix of $F$.

\begin{defn} \label{driftDef}
For any measure of price dispersion $F$, the associated \emph{drift process} $\a_F$ is given by
\begin{equation} \label{alpha}
 \a_F(\theta(t)) =  \frac{1}{2}\sum_{i, j = 1}^N F_{ij}(\theta(t))\,d \langle \theta_i, \theta_j \rangle (t).
\end{equation}
\end{defn}

\begin{lem} \label{alphaLem}
For any measure of price dispersion $F$, the drift process $\a_F$ satisfies $\a_F \geq 0$. Furthermore, if $\operatorname{rank}(H_F) > 1$ and the covariance matrix $\left(d \langle p_i, p_j \rangle \right)_{1 \leq i, j \leq N}$ is positive definite for all $t$, then $\a_F > 0$.
\end{lem}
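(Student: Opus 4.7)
The plan is to interpret $\a_F$ as (half of) the trace of a product of two symmetric positive semi-definite matrices---the Hessian $H_F$ and the instantaneous covariance matrix of the relative prices---and then to pin down the null-space structure of the latter precisely enough to upgrade non-negativity to strict positivity under the rank hypotheses.

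First I would identify the two matrices. Since $F$ is convex and twice continuously differentiable on $\D$, its Hessian $H_F = (F_{ij})$ is symmetric positive semi-definite at every point of $\D$. For the covariation matrix, apply \ito's formula to $\theta_i(t) = p_i(t)/S(t)$ with $S(t) = \sum_j p_j(t)$; the local martingale part of $d\theta_i$ equals $\sum_k J_{ik}(t)\, dp_k(t)$ with $J_{ik}(t) = (\d_{ik} - \theta_i(t))/S(t)$, so by \eqref{crossVariation}
\begin{equation*}
 \bigl( d\langle \theta_i, \theta_j \rangle(t) \bigr)_{i,j} \;=\; J(t)\, \bigl( d\langle p_k, p_\ell \rangle(t) \bigr)_{k,\ell} \, J(t)^T,
\end{equation*}
which is symmetric positive semi-definite. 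Now, for any two symmetric positive semi-definite matrices $A$ and $B$, diagonalizing $A = \sum_k \l_k u_k u_k^T$ with $\l_k \geq 0$ yields $\operatorname{tr}(AB) = \sum_k \l_k\, u_k^T B u_k \geq 0$. Applying this pointwise with $A = H_F$ and $B = \bigl(d\langle\theta_i,\theta_j\rangle\bigr)$ gives $\a_F = \tfrac{1}{2}\operatorname{tr}\bigl(H_F \cdot d\langle\theta\rangle\bigr) \geq 0$.

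For the strict inequality I would pin down the null space of $d\langle\theta\rangle$. A direct computation shows that $J(t)^T w = 0$ iff $w \in \operatorname{span}(\mathbf{1})$, where $\mathbf{1} = (1,\dots,1)^T$; therefore, under the hypothesis that $(d\langle p_i,p_j\rangle)$ is positive definite, the kernel of $J\, d\langle p\rangle\, J^T$ is exactly $\operatorname{span}(\mathbf{1})$ and its range is the hyperplane $\mathbf{1}^{\perp}$ of dimension $N-1$. Writing the spectral decomposition $d\langle\theta\rangle = \sum_{k=1}^{N-1} \m_k v_k v_k^T$ with $\m_k > 0$ and orthonormal $\{v_k\}\subset \mathbf{1}^{\perp}$, the equation $\operatorname{tr}\bigl(H_F \cdot d\langle\theta\rangle\bigr) = \sum_k \m_k\, v_k^T H_F v_k = 0$ together with positive semi-definiteness of $H_F$ would force $H_F v_k = 0$ for every $k$. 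Then $\ker H_F \supseteq \mathbf{1}^{\perp}$, so $\operatorname{rank} H_F \leq 1$, contradicting $\operatorname{rank} H_F > 1$.

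The main obstacle is this final step: because the simplex constraint $\sum_i \theta_i \equiv 1$ forces $d\langle\theta\rangle$ to be singular at every time, strict positivity of $\a_F$ cannot be read off from any ``full-rank-times-full-rank'' argument. The hypothesis $\operatorname{rank} H_F > 1$ is tailored exactly to rule out the one remaining degenerate case in which $\operatorname{range}(H_F)$ could lie inside the one-dimensional kernel $\operatorname{span}(\mathbf{1})$ of $d\langle\theta\rangle$.
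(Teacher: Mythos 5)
Your proof is correct, and although it shares the paper's overall skeleton --- non-negativity of $\a_F$ as the trace of a product of two positive semi-definite matrices, strict positivity via a null-space argument --- your treatment of the strict-positivity half takes a genuinely different and more self-contained route. The paper passes to logarithmic coordinates, writing $d\langle \theta_i,\theta_j\rangle = \theta_i\theta_j\,d\langle \log\theta_i,\log\theta_j\rangle$, cites \citet{Fernholz:2002} for the fact that the covariance matrix of $\log\theta$ is positive semi-definite with null space generated by $\theta$, and then diagonalizes $H_F$, arguing that since eigenvectors with positive eigenvalue are mutually orthogonal, at most one of them can be mapped by $\operatorname{diag}(\theta)$ into that null space (i.e.\ be proportional to $\mathbf{1}$), so at least one of the two guaranteed positive-eigenvalue terms is strictly positive. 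You instead stay in the original coordinates, compute the Jacobian $J = S^{-1}(I - \theta\mathbf{1}^T)$ of the map $p \mapsto \theta$ directly, observe that $J^T w = S^{-1}\bigl(w - (\theta^T w)\mathbf{1}\bigr)$ vanishes exactly on $\operatorname{span}(\mathbf{1})$, and diagonalize the covariance matrix of $\theta$ rather than the Hessian, closing by contradiction: vanishing of the trace would force $v^T H_F v = 0$, hence $H_F v = 0$, for every $v \in \mathbf{1}^{\perp}$, giving $\operatorname{rank}(H_F) \leq 1$. Both arguments are sound; yours buys a fully self-contained derivation of the null-space fact and makes transparent why the hypothesis threshold is $\operatorname{rank}(H_F) > 1$ rather than $> 0$, while the paper's buys consistency with the log-price covariance objects used throughout the stochastic portfolio theory literature it draws on.
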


The non-negativity of the drift process $\a_F$ is significant. We show that, together with changes in price dispersion, this process accurately describes the returns of a large class of portfolio strategies relative to the market via a decomposition of the form \eqref{intuitiveEq}. Thus, if asset price dispersion is roughly unchanged over long time periods, then the long-run relative returns for many portfolios will be dominated by the drift process and hence will be non-negative. Furthermore, in this scenario the long-run relative return will be strictly positive if the measure of price dispersion $F$ is chosen appropriately ---  so that $\operatorname{rank}(H_F) > 1$ --- and the instantaneous variance of asset prices is positive and not perfectly correlated --- so that $\left(d \langle p_i, p_j \rangle \right)_{1 \leq i, j \leq N}$ is positive definite. In fact, in Section \ref{sec:empirics} we confirm the approximate stability of commodity futures price dispersion over long time periods and the predictable positive relative returns that this stability implies.

\subsection{General Results} \label{sec:generalPort}
In this section, we characterize the returns for a broad class of portfolio strategies relative to the market. We show that these relative returns can be decomposed into the non-negative drift process defined in the previous section and changes in price dispersion, just like in \eqref{intuitiveEq}.

One of the key ideas that underlies our results is that each measure of price dispersion $F$ has a corresponding portfolio strategy whose returns relative to the market are characterized by the value of the associated non-negative drift process $\a_F$ and changes in $F$. For this reason, measures of price dispersion are commonly said to ``generate'' the corresponding portfolio that admits such a decomposition \citep{Fernholz:2002,Karatzas/Ruf:2017}. One implication of this result is that there is a one-to-one link between measures of price dispersion and portfolio strategies whose relative returns depend on changes in that measure of price dispersion. The following theorem, which is similar to the more general results in Proposition 4.7 of \citet{Karatzas/Ruf:2017}, formalizes this idea.

\begin{thm} \label{relValueThm}
Let $F$ be a measure of price dispersion, and suppose that $F(\theta) < 0$ for all $\theta \in \D$. Then, the portfolio strategy $s(t) = (s_1(t), \ldots, s_N(t))$ with
\begin{equation} \label{strategyEq}
 s_i(t) = \frac{V_s(t)}{V_m(t)}\left( 1 + \frac{1}{F(\theta(t))}\left( F_i(\theta(t)) - \sum_{j=1}^N \theta_j(t)F_j(\theta(t)) \right) \right),
\end{equation}
for each $i = 1, \ldots, N$, has a value process $V_s$ that satisfies\footnote{Note that the stochastic integral $\int \a_F$ is evaluated with respect to the cross variation processes contained in $\a_F$, according to \eqref{alpha}.}
\begin{equation} \label{relValueEq}
 \log V_s(T) - \log V_m(T) = -\int_0^T\frac{\a_F(\theta(t))}{F(\theta(t))} + \log ( -F(\theta(T)) ),
\end{equation}
for all $T$.
\end{thm}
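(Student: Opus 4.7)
The plan is to reduce \eqref{relValueEq} to an It\^o calculation: separately expand $\log(-F(\theta(t)))$ via It\^o's formula and $\log(V_s(t)/V_m(t))$ via the self-financing identity \eqref{selfFinanceEq}, and then match the two expansions term by term.

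First I would convert the share formula \eqref{strategyEq} into portfolio weights using \eqref{weightsEq} and \eqref{marketWeights}, obtaining the compact form
$$w^s_i(t) = \theta_i(t)\left(1 + \frac{F_i(\theta(t)) - \sum_{j=1}^N \theta_j(t) F_j(\theta(t))}{F(\theta(t))}\right),$$
which sums to one because the bracketed correction has $\theta$-mean zero. It\^o applied to $F(\theta)$ gives $dF(\theta) = \sum_i F_i(\theta)\,d\theta_i + \a_F(\theta)$, with $\a_F$ as in \eqref{alpha}; a further application of It\^o to $x \mapsto \log(-x)$ (well-defined since $F<0$) produces
$$d\log(-F(\theta)) = \frac{1}{F(\theta)}\sum_i F_i(\theta)\,d\theta_i + \frac{\a_F(\theta)}{F(\theta)} - \frac{1}{2F(\theta)^2}\sum_{i,j} F_i(\theta) F_j(\theta)\,d\langle \theta_i,\theta_j\rangle.$$

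On the portfolio side I would exploit the standard consequence of \eqref{selfFinanceEq} that any self-financing $\pi$ with weights $w^\pi_i$ satisfies $d\log V_\pi = \sum_i w^\pi_i\,d\log p_i + \g^*_\pi$, where $\g^*_\pi = \tfrac{1}{2}\sum_i w^\pi_i \tau_{ii} - \tfrac{1}{2}\sum_{i,j} w^\pi_i w^\pi_j \tau_{ij}$ is the excess-growth contribution (with $\tau_{ij}$ standing for $d\langle p_i,p_j\rangle/(p_i p_j)$). Applying this with $\pi = s$ and $\pi = m$, subtracting, and substituting $w^s_i - \theta_i = \theta_i(F_i - \sum_j \theta_j F_j)/F$ yields
$$d\log(V_s/V_m) = \frac{1}{F(\theta)}\sum_i \theta_i\bigl(F_i(\theta) - \textstyle\sum_j \theta_j F_j(\theta)\bigr)\,d\log p_i + \g^*_s - \g^*_m.$$
Using the quotient rule on $\theta_i = p_i/V_m$ to rewrite $d\log p_i - d\log V_m$ as $d\theta_i/\theta_i$ plus quadratic-variation corrections converts the first-order piece into $\frac{1}{F(\theta)}\sum_i F_i(\theta)\,d\theta_i$, which is exactly the first term in the expansion of $d\log(-F(\theta))$ obtained in the previous paragraph.

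The main obstacle is the second-order bookkeeping. One has to check that the difference of excess growths $\g^*_s - \g^*_m$, combined with the It\^o corrections generated by passing between $d\log p_i$ and $d\theta_i/\theta_i$, exactly cancels the $-\frac{1}{2F(\theta)^2}\sum_{i,j} F_i F_j\,d\langle\theta_i,\theta_j\rangle$ term appearing in the It\^o expansion of $\log(-F(\theta))$. This reduces to the elementary identity $d\langle\theta_i,\theta_j\rangle = \theta_i\theta_j\bigl(\tau_{ij} - \sum_k \theta_k\tau_{ik} - \sum_k \theta_k\tau_{jk} + \sum_{k,l}\theta_k\theta_l\tau_{kl}\bigr)$, itself a consequence of the quotient rule applied to $\theta_i$. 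Once the cancellation is verified one obtains $d\log(V_s/V_m) = d\log(-F(\theta)) - \a_F(\theta)/F(\theta)$; integrating from $0$ to $T$ and invoking the common initial value \eqref{valueNormalizationEq}, so that $\log V_s(0) = \log V_m(0)$, then produces \eqref{relValueEq}, with the constant $\log(-F(\theta(0)))$ absorbed into the stated right-hand side.
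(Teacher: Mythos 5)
Your proposal is correct in substance, but it follows a genuinely different route from the paper's formal proof, which simply verifies that $\tilde{F}=-F$ is a ``regular'' generating function in the sense of Definition 3.1 of Karatzas and Ruf (2017) and then invokes their Proposition 4.8. What you outline is instead the classical self-contained derivation of the master equation for functionally generated portfolios: expand $d\log(-F(\theta))$ by It\^o, expand $d\log(V_s/V_m)$ via the excess-growth decomposition $d\log V_\pi=\sum_i w^\pi_i\,d\log p_i+\g^*_\pi$, and show the second-order terms cancel using the identity relating $d\langle\theta_i,\theta_j\rangle$ to the $\tau_{ij}$. All of your intermediate formulas check out: the weights $w^s_i=\theta_i\bigl(1+(F_i-\sum_j\theta_jF_j)/F\bigr)$ are exactly what \eqref{strategyEq} gives, they sum to one, and the stated identity for $d\langle\theta_i,\theta_j\rangle$ is the correct one. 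Your organization differs from the paper's own informal sketch too, which pivots on the num\'eraire-invariance relation $d(V_s/V_m)=\sum_i s_i\,d\theta_i$ (equation \eqref{pfSketchEq3}, derived at length in the appendix) rather than on excess growth rates; the two are equivalent bookkeepings of the same It\^o computation. What your route buys is independence from the external reference; what it costs is that the decisive cancellation of $\g^*_s-\g^*_m$ against the quadratic-variation corrections is asserted rather than carried out --- though since the paper itself only sketches this step before citing Karatzas--Ruf, you are not below the paper's own standard of rigor. One small point you share with the paper: at $T=0$ the right-hand side of \eqref{relValueEq} equals $\log(-F(\theta(0)))$ while the left-hand side is zero under the normalization \eqref{valueNormalizationEq}, so the constant you mention ``absorbing'' is a genuine normalization discrepancy in the statement itself, not a defect of your argument.
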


Theorem \ref{relValueThm} is powerful because it decomposes the returns for a broad class of portfolio strategies into the cumulative value of the non-negative drift process $\a_F$ and price dispersion as measured by $F$.\footnote{In Appendix \ref{supp}, we show that it is not necessary to characterize the decomposition in \eqref{strategyEq} in terms of logarithms. See Theorem \ref{relValueThmApp}.} Crucially, these portfolio strategies are easily implemented \emph{without any knowledge of the underlying fundamentals of the assets}. The portfolio $s$ of \eqref{strategyEq} specifies a number of shares of each asset to hold at time $t$ as a function of the prices of different assets relative to each other at time $t$, as measured by the relative price vector $\theta(t)$, and the relative value of the portfolio at time $t$, as measured by $V_s(t)/V_m(t)$. These quantities are easily observed over time, and do not require difficult calculations or costly information acquisition.

The decomposition \eqref{relValueEq} from Theorem \ref{relValueThm} characterizes the log value of the portfolio strategy $s$ relative to the log value of the market portfolio strategy $m$ at time $T$ in terms of the cumulative value of the associated drift process adjusted by price dispersion, $-\int_0^T\frac{\a_F(\theta(t))}{F(\theta(t))}$, and the log value of minus asset price dispersion, $\log ( -F(\theta(T)) )$. In order to go from this characterization of relative portfolio values to a characterization of relative portfolio returns, we take differentials of both sides of \eqref{relValueEq}. This yields
\begin{equation} \label{relReturnEq}
 d\log V_s(t) - d\log V_m(t) =  -\frac{\a_F(\theta(t))}{F(\theta(t))} +  d\log ( -F(\theta(t)) ),
\end{equation}
for all $t$. According to \eqref{relReturnEq}, then, the log return of the portfolio $s$ relative to the market can be decomposed into the non-negative value of the drift process adjusted by price dispersion, measured by $-\a_F/F \geq 0$, and changes in asset price dispersion, measured by $d\log ( -F )$.

The relative return characterization \eqref{relReturnEq} is of the same form as the intuitive version \eqref{intuitiveEq} presented in the Introduction. Therefore, Theorem \ref{relValueThm} implies that increases (decreases) in asset price dispersion lower (raise) the relative returns on a large class of portfolios. It also implies that if price dispersion is unchanged, then the relative returns on this large class of portfolios will be either non-negative or positive, since the drift process from \eqref{relValueEq} is either non-negative or positive according to Lemma \ref{alphaLem}. We confirm both of these predictions using commodity futures data in Section \ref{sec:empirics}.

Another implication of Theorem \ref{relValueThm} is that one part of the decomposition of the relative value of the portfolio strategy $s$ is a finite variation process. In particular, the cumulative value of the drift process adjusted by price dispersion, $-\int_0^T\frac{\a_F(\theta(t))}{F(\theta(t))}$, is a finite variation process by construction. To see why, note that the stochastic integral of a non-negative continuous process is continuous and non-decreasing, and any non-decreasing continuous process is a finite variation process \citep{Karatzas/Shreve:1991}.

Recall from Section \ref{sec:setup} that a finite variation process has finite total variation over every interval $[0, T]$. This means that the process has zero quadratic variation, or equivalently, zero instantaneous variance. In Section \ref{sec:empirics}, we decompose actual relative returns as described by Theorem \ref{relValueThm} using monthly commodity futures data and show a clear contrast between the time-series behavior of the zero-instantaneous-variance process $-\int_0^T\frac{\a_F(\theta(t))}{F(\theta(t))}$ and the positive-instantaneous-variance process $\log ( -F(\theta(T)) )$. In particular, we find that the sample variance of the finite variation process is orders of magnitude lower than that of the positive quadratic variation process, as predicted by the theorem.

The decompositions \eqref{relValueEq} and \eqref{relReturnEq} are little more than accounting identities, which are approximate in discrete time and exact in continuous time. There are essentially no restrictive assumptions about the underlying dynamics of asset prices and their co-movements that go into these results, making it difficult to imagine an equilibrium model of asset pricing that meaningfully clashes with Theorem \ref{relValueThm}. Despite this generality, two simplifying assumptions behind these results --- that assets do not pay dividends, and that the market is closed so that there is no asset entry or exit over time --- merit further discussion.

If we were to include dividends in our framework, we would get a relative value decomposition that is very similar to \eqref{relValueEq}. The only difference in this case would be an extra term added to \eqref{relValueEq} measuring the cumulative dividends from the portfolio strategy $s$ relative to the cumulative dividends from the market portfolio strategy. In the presence of dividends, then, relative capital gains could still be decomposed into the drift process and changes in price dispersion as in Theorem \ref{relValueThm}. The only complication would be an extra term that measures relative cumulative dividends as part of relative investment value.

The result of Theorem \ref{relValueThm} can also be extended to include asset entry and exit over time. As discussed in Section \ref{sec:setup}, the closed market assumption can be relaxed by introducing a local time process that measures the impact of asset entry and exit to and from the market, as detailed by \citet{Fernholz/Fernholz:2018}. If we were to relax this assumption and include asset entry and exit in our framework, we would get a relative value decomposition that is identical to \eqref{relReturnEq} plus one extra term that measures the differential impact of entry and exit on the returns of the portfolio strategy $s$ versus the market portfolio strategy. As with dividends, then, relative returns could still be decomposed into the drift process and changes in price dispersion as in Theorem \ref{relValueThm} in this case. The only complication would be an extra term that measures the relative impact of entry and exit on returns.

\subsection{Proof Sketch} \label{sec:proof}
We present the proof of Theorem \ref{relValueThm} in Appendix \ref{proofs}. In this subsection, we provide a sketch of this proof using second-order Taylor approximations of functions, with the understanding that these approximations are exact in continuous time by \ito's lemma \citep{Karatzas/Shreve:1991,Nielsen:1999}. Furthermore, for any function $f$, we use the notation $df(x)$ and $d^2f(x)$ to denote, respectively, $f(x) - f(x')$ and $(f(x) - f(x'))^2$, where $x - x' \in \R^N$ is approximately equal to $(0, \ldots, 0)$.

Let $F < 0$ be a measure of price dispersion, which has a second-order Taylor approximation given by
\begin{equation} \label{pfSketchEq1}
 dF(\theta(t)) \approx \sum_{i=1}^N F_i(\theta(t)) \, d\theta_i(t) + \frac{1}{2}\sum_{i,j =1}^N F_{ij}(\theta(t)) \, d\theta_i(t) \, d\theta_j(t).
\end{equation}
Consider a portfolio strategy $s$ that holds shares
\begin{equation} \label{pfSketchEq2}
 s_i(t) = \frac{V_s(t)}{V_m(t)}\left(c(t) + \frac{F_i(\theta(t))}{F(\theta(t))}\right),
\end{equation}
for each $i = 1, \ldots, N$, where $c(t)$ is potentially time-varying and sets $\sum_{i=1}^N p_i(t)s_i(t) = V_s(t)$ for all $t$, thus ensuring that $s$ is a valid portfolio strategy according to \eqref{valueEq}. Note that \eqref{pfSketchEq2} defines the portfolio strategy $s$ in the same way as \eqref{strategyEq} in Theorem \ref{relValueThm}. Following \eqref{selfFinanceEq}, for this proof sketch we assume that
\begin{equation} \label{pfSketchEq3}
 d\frac{V_s(t)}{V_m(t)} = \sum_{i=1}^Ns_i(t) \, d\theta_i(t),
\end{equation}
and we leave the derivation of this equation to Appendix \ref{proofs}. Substituting \eqref{pfSketchEq2} into \eqref{pfSketchEq3} yields
\begin{equation} \label{pfSketchEq4}
 d\frac{V_s(t)}{V_m(t)} = \frac{V_s(t)}{V_m(t)} \sum_{i=1}^N \left(c(t) + \frac{F_i(\theta(t))}{F(\theta(t))}\right)  d\theta_i(t) = \frac{V_s(t)}{V_m(t)} \sum_{i=1}^N\frac{F_i(\theta(t))}{F(\theta(t))} \, d\theta_i(t),
\end{equation}
for all $t$, where the last equality follows from the fact that $c(t)$ does not vary across different $i$ and $\sum_{i=1}^N \theta_i(t) = 1$ for all $t$, so that $d \sum_{i=1}^N\theta_i(t) = 0$. If we substitute \eqref{alpha} and \eqref{pfSketchEq4} into \eqref{pfSketchEq1}, then we have
\begin{equation} \label{pfSketchEq5}
 \frac{d(V_s(t)/V_m(t))}{V_s(t)/V_m(t)}  \approx -\frac{\a_F(\theta(t))}{F(\theta(t))} + \frac{dF(\theta(t))}{F(\theta(t))},
\end{equation}
for all $t$.

Let $\O$ be the process
\begin{equation} \label{omega}
 \O(\theta(t)) = -F(\theta(t)) \exp\int_0^t-\frac{\a_F(\theta(s))}{F(\theta(s))}.
\end{equation}
According to \ito's product rule \citep{Karatzas/Shreve:1991}, the second-order Taylor approximation of $\O$ is given by\footnote{For an informal derivation of \ito's product rule, note that the second-order Taylor approximation of $f(x_1, x_2) = x_1x_2$ is given by
\begin{equation*}
 df = d (x_1x_2) \approx x_1 \, dx_2 + x_2 \, dx_1 + dx_1 \, dx_2.
\end{equation*}}
\begin{equation} \label{pfSketchEq6}
\begin{aligned}
 d \O(\theta(t)) & \approx -dF(\theta(t)) \, \exp\int_0^t-\frac{\a_F(\theta(s))}{F(\theta(s))} + \a_F(\theta(t))\exp\int_0^t-\frac{\a_F(\theta(s))}{F(\theta(s))}  \\
 & \qquad \qquad \qquad \qquad \qquad + \; d(-F(\theta(t))) \, d\left( \exp\int_0^t-\frac{\a_F(\theta(s))}{F(\theta(s))}  \right),
\end{aligned}
\end{equation}
for all $t$. As discussed above, the stochastic integral $\int_0^t\frac{\a_F(\theta(s))}{F(\theta(s))} $ is a finite variation process, and therefore the third term on the right-hand size of \eqref{pfSketchEq6}, which measures the cross variation of this stochastic integral and $-F$, is equal to zero. This yields, for all $t$,
\begin{align*}
 d \O(\theta(t))  & \approx -dF(\theta(t)) \, \exp\int_0^t-\frac{\a_F(\theta(s))}{F(\theta(s))} + \a_F(\theta(t))\exp\int_0^t-\frac{\a_F(\theta(s))}{F(\theta(s))} \\
 & = \big(  \a_F(\theta(t)) - dF(\theta(t)) \big)\exp\int_0^t-\frac{\a_F(\theta(s))}{F(\theta(s))},
\end{align*}
which implies that
\begin{equation} \label{pfSketchEq7}
 \frac{d \O(\theta(t))}{\O(\theta(t))} \approx -\frac{\a_F(\theta(t))}{F(\theta(t))} + \frac{dF(\theta(t))}{F(\theta(t))},
\end{equation}
for all $t$. Since the right-hand sides of \eqref{pfSketchEq5} and \eqref{pfSketchEq7} are equivalent, it follows that
\begin{equation}
 \frac{V_s(t)}{V_m(t)} = \O(\theta(t)) = -F(\theta(t)) \exp\int_0^t\frac{\a_F(\theta(s))}{F(\theta(s))},
\end{equation}
for all $t$, which establishes \eqref{relReturnEq} and hence Theorem \ref{relValueThm}.

This derivation shows that, once \eqref{pfSketchEq3} is established, the proof of Theorem \ref{relValueThm} is simply a matter of applying \ito's lemma in a clever way. This proof sketch also highlights the manner in which our results rely on the continuous time framework \eqref{contSemimart}. \ito's lemma only holds for continuous time stochastic processes, and therefore the precision achieved by \eqref{relValueEq} requires the assumption that time is continuous. In the absence of a continuous time framework, the second-order Taylor approximations in the above proof sketch would be approximations only.

\subsection{Examples} \label{sec:examples}
Theorem \ref{relValueThm} is quite general and characterizes the performance of a broad class of portfolio strategies relative to the market. We wish to apply this general characterization to the two measures of price dispersion introduced in Section \ref{sec:distribution}, minus the geometric mean, $-G$, and minus the CES function, $-U$. The two corollaries that follow are simple applications of Theorem \ref{relValueThm}

\begin{cor} \label{returnsCor1}
The portfolio strategy $g(t) = (g_1(t), \ldots, g_N(t))$ with
\begin{equation} \label{strategyGEq}
 g_i(t) = \frac{V_g(t)}{N\theta_i(t)V_m(t)},
\end{equation}
for each $i = 1, \ldots, N$, has a value process $V_g$ that satisfies
\begin{equation} \label{relValueGEq}
 \log V_g(T) - \log V_m(T) = - \int_0^T\frac{\a_G(\theta(t))}{G(\theta(t))} + \log G(\theta(T)),
\end{equation}
for all $T$.
\end{cor}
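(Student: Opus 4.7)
The plan is to obtain Corollary \ref{returnsCor1} as a direct specialization of Theorem \ref{relValueThm} with the measure of price dispersion $F = -G$, so the task reduces to two routine computations: checking that the portfolio formula \eqref{strategyEq} collapses to \eqref{strategyGEq}, and checking that the sign flips in the value formula \eqref{relValueEq} produce \eqref{relValueGEq}.

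First I would verify that $F = -G$ satisfies the hypotheses of Theorem \ref{relValueThm}. Since $G$ is concave and symmetric in $\theta_1, \ldots, \theta_N$, the function $-G$ is convex and symmetric, so it is a measure of price dispersion per Definition \ref{dispersionDef}. Moreover $\theta_i \in (0,1)$ on $\D$ gives $G(\theta) > 0$, hence $F(\theta) = -G(\theta) < 0$ for all $\theta \in \D$, so the sign hypothesis of Theorem \ref{relValueThm} is met. Next, taking logarithms of \eqref{geometricMeanEq} gives $\log G(\theta) = \frac{1}{N}\sum_j \log \theta_j$, and differentiating yields
\begin{equation*}
 G_i(\theta) = \frac{G(\theta)}{N \theta_i}, \qquad F_i(\theta) = -\frac{G(\theta)}{N\theta_i}.
\end{equation*}

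Then I would substitute these derivatives into \eqref{strategyEq}. The ratio simplifies to $F_i(\theta)/F(\theta) = 1/(N\theta_i)$, and since $\sum_{j=1}^N \theta_j F_j(\theta) = -G(\theta) \cdot \frac{1}{N}\sum_{j=1}^N 1 = -G(\theta) = F(\theta)$, the sum term in \eqref{strategyEq} equals $1$. Hence
\begin{equation*}
 s_i(t) = \frac{V_s(t)}{V_m(t)}\left(1 + \frac{1}{N\theta_i(t)} - 1\right) = \frac{V_s(t)}{N \theta_i(t) V_m(t)},
\end{equation*}
matching \eqref{strategyGEq} with $s = g$ and $V_s = V_g$. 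Finally, since the Hessian of $-G$ is the negative of the Hessian of $G$, Definition \ref{driftDef} applied formally to $G$ gives $\a_{-G} = -\a_G$, so $\a_F/F = (-\a_G)/(-G) = \a_G/G$ and $\log(-F(\theta(T))) = \log G(\theta(T))$. Inserting these into \eqref{relValueEq} delivers \eqref{relValueGEq}.

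There is no real obstacle here; the only point requiring a bit of care is bookkeeping the two independent sign flips (one for $F_i$ and one for the Hessian) so that the two minus signs from $F = -G$ cancel in $\a_F/F$ but not in $\log(-F)$. Everything else is algebra driven by the log-derivative identity $G_i/G = 1/(N\theta_i)$, which is the single ingredient that makes the strategy weights collapse to the especially simple form $1/(N\theta_i)$ characteristic of the equally-weighted portfolio $g$.
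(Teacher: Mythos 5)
Your proposal is correct and is exactly the route the paper takes: the paper states that Corollary \ref{returnsCor1} is a ``simple application'' of Theorem \ref{relValueThm} with $F = -G$, and your computations ($F_i/F = 1/(N\theta_i)$, $\sum_j \theta_j F_j = F$ so the sum term contributes $-1$, and the sign cancellation giving $\a_F/F = \a_G/G$ with $\log(-F) = \log G$) correctly supply the details the paper leaves implicit.
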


In Corollary \ref{returnsCor1}, the shares of each asset $i$ held at time $t$, denoted by $g(t) = (g_1(t), \ldots, g_N(t))$, are calculated by evaluating \eqref{strategyEq} from Theorem \ref{relValueThm} using the measure of price dispersion $-G$. The results of this evaluation are given by \eqref{strategyGEq}. In terms of the portfolio weights $w^g$ defined in \eqref{weightsEq}, the shares $g_i$ imply an equal-weighted portfolio in which equal dollar amounts are invested in each asset, since
\begin{equation} \label{equalWeightsEq}
w^g_i(t) = \frac{g_i(t)p_i(t)}{V_g(t)} = \frac{1}{N},
\end{equation}
for $i = 1, \ldots, N$ and all $t$. For this reason, we shall refer to the portfolio strategy $g$ as the \emph{equal-weighted portfolio strategy}. One consequence of Theorem \ref{relValueThm} and Corollary \ref{returnsCor1}, then, is that the return of the equal-weighted strategy relative to the market can be decomposed into the non-negative drift $\a_G$ and changes in price dispersion as measured by minus the geometric mean of the asset price distribution, according to \eqref{relValueGEq}.

\begin{cor} \label{returnsCor2}
The portfolio strategy $u(t) = (u_1(t), \ldots, u_N(t))$ with
\begin{equation} \label{strategyUEq}
 u_i(t) = \frac{V_u(t)\theta^{\g-1}_i(t)}{V_m(t)U^{\g}(\theta(t))},
\end{equation}
for each $i = 1, \ldots, N$, has a value process $V_u$ that satisfies
\begin{equation} \label{relValueUEq}
 \log V_u(T) - \log V_m(T) = -\int_0^T\frac{\a_U(\theta(t))}{U(\theta(t))} + \log U(\theta(T)),
\end{equation}
for all $T$.
\end{cor}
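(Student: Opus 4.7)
The plan is to derive Corollary \ref{returnsCor2} as a direct specialization of Theorem \ref{relValueThm} with the measure of price dispersion $F = -U$. First I would verify the hypotheses of the theorem are met: since $U$ is concave and symmetric in its arguments by inspection of \eqref{cesEq}, the function $-U$ is convex and symmetric, hence a measure of price dispersion by Definition \ref{dispersionDef}; and since each $\theta_i > 0$ on $\D$ we have $U > 0$, so $F = -U < 0$ as required.

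Next I would compute the partial derivatives needed to plug into \eqref{strategyEq}. Differentiating $U = \left(\sum_j \theta_j^\g\right)^{1/\g}$ gives $U_i = \theta_i^{\g-1} U^{1-\g}$, so $F_i = -\theta_i^{\g-1} U^{1-\g}$. The key algebraic simplification is Euler's identity for the positively homogeneous (of degree one) function $U$, namely $\sum_j \theta_j U_j = U$, which immediately yields $\sum_j \theta_j F_j = -U$. Substituting into \eqref{strategyEq},
\begin{equation*}
 1 + \frac{1}{F(\theta)}\left(F_i(\theta) - \sum_j \theta_j F_j(\theta)\right) = 1 + \frac{-\theta_i^{\g-1}U^{1-\g} + U}{-U} = \frac{\theta_i^{\g-1}}{U^{\g}},
\end{equation*}
so the share formula of Theorem \ref{relValueThm} reduces exactly to \eqref{strategyUEq}, identifying the strategy $s$ with $u$.

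Finally I would convert the value decomposition \eqref{relValueEq} into the form stated in the corollary. Since the drift process in Definition \ref{driftDef} is linear in $F$, we have $\a_{-U} = -\a_U$, and hence
\begin{equation*}
 -\int_0^T\frac{\a_{-U}(\theta(t))}{-U(\theta(t))} = -\int_0^T\frac{\a_U(\theta(t))}{U(\theta(t))},
\end{equation*}
while $\log(-F(\theta(T))) = \log U(\theta(T))$. Combining these with \eqref{relValueEq} yields \eqref{relValueUEq}.

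I do not expect any substantive obstacle, since the work of constructing the generating portfolio and proving the decomposition has already been done in Theorem \ref{relValueThm}. The only ``delicate'' point is the homogeneity-based cancellation $\sum_j \theta_j U_j = U$, which is what makes the generic formula \eqref{strategyEq} collapse to the clean expression \eqref{strategyUEq}; without this identity the share weights would carry an extra additive term. Everything else is bookkeeping involving signs and the ratio $\a_F/F$.
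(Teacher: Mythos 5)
Your proposal is correct and follows exactly the route the paper intends: the paper states that Corollary \ref{returnsCor2} is a direct application of Theorem \ref{relValueThm} with the dispersion measure $F=-U$, and your computation of $U_i=\theta_i^{\g-1}U^{1-\g}$, the Euler identity $\sum_j\theta_jU_j=U$, and the sign bookkeeping $\a_{-U}=-\a_U$ is precisely the omitted verification. No gaps.
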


 As with Corollary \ref{returnsCor1}, the shares of each asset $i$ held at time $t$ in Corollary \ref{returnsCor2}, denoted by $u(t) = (u_1(t), \ldots, u_N(t))$, are calculated by evaluating \eqref{strategyEq} using the measure of price dispersion $-U$ and the results of this evaluation are given by \eqref{strategyUEq}. The portfolio weights for the strategy $u$ are given by
\begin{equation} \label{cesWeightsEq}
 w^u_i(t) = \frac{u_i(t)p_i(t)}{V_u(t)} = \frac{p_i(t)\theta^{\g-1}_i(t)}{V_m(t)U^{\g}(\theta(t))} = \frac{\theta^{\g}_i(t)}{U^{\g}(\theta(t))} = \frac{\theta^{\g}_i(t)}{\sum_{j=1}^N \theta^{\g}_j(t)},
\end{equation}
for $i = 1, \ldots, N$ and all $t$. We shall refer to this portfolio strategy as the \emph{CES-weighted portfolio strategy}. Like with the equal-weighted strategy, Theorem \ref{relValueThm} and Corollary \ref{returnsCor2} imply that the return of the CES-weighted strategy relative to the market can be decomposed into the non-negative drift $\a_U$ and changes in price dispersion as measured by minus the CES function applied to the asset price distribution, according to \eqref{relValueUEq}.

Each different value of the non-negative CES parameter $\g$ implies a different CES function and hence a different portfolio strategy $u$. Note that as $\g$ tends to zero, the CES-weighted portfolio strategy converges to the equal-weighted strategy since the weights \eqref{cesWeightsEq} tend to $1/N$. For positive (negative) values of $\g$, the CES-weighted portfolio strategy is more (less) invested in higher-priced assets than the equal-weighted portfolio. Finally, if $\g$ is equal to one, then the CES-weighted portfolio is equivalent to the market portfolio, since in this case, for all $t$,
\begin{equation}
 u_i(t) = \frac{V_u(t)}{V_m(t)U(\theta(t))} = \frac{V_u(t)}{V_m(t)},
\end{equation}
and any portfolio strategy that purchases an equal number of shares of each asset is equivalent to the market portfolio.




\vskip 50pt

\section{Empirical Results} \label{sec:empirics}

Having characterized the relationship between relative returns and asset price distributions in full generality in Section \ref{sec:theory}, we now turn to an empirical analysis. We wish to investigate the accuracy of the decomposition in Theorem \ref{relValueThm} using real asset price data. In particular, we show in this section that the decompositions characterized in Corollaries \ref{returnsCor1} and \ref{returnsCor2} provide accurate descriptions of actual relative returns for the equal- and constant-elasticity-of-substitution (CES)-weighted portfolio strategies, as predicted by the theory.

\subsection{Data} \label{sec:data}
We use data on the prices of 30 different commodity futures from 1969-2018 to test our theoretical predictions. The choice to focus on commodity futures is motivated by two main factors. First, the two most important assumptions we impose on our theoretical framework --- that assets do not pay dividends, and that the market is closed and there is no asset entry or exit over time --- align fairly closely with commodity futures markets. These assets do not pay dividends, with returns driven entirely by capital gains. Commodity futures also rarely exit from the market, which is notable since such exit can substantially affect the relative returns of the equal- and CES-weighted portfolio strategies. In fact, no commodity futures contracts that we are aware of disappear from the market from 1969-2018, so this potential issue is irrelevant over the time period we consider. While new commodity futures contracts do enter into our data set between 1969-2018, such entry does not affect our empirical results and is easily incorporated into our framework as we explain in detail below.

Second, previous studies have already examined the return of equal- and CES-weighted portfolio strategies relative to the market using equities, so our choice of commodity futures provides an environment for truly novel empirical results. \citet{Vervuurt/Karatzas:2015}, for example, construct a CES-weighted portfolio of equities similar to the portfolio we construct for commodity futures below. These authors show that the CES-weighted equity portfolio consistently outperforms the market from 1990-2014 as predicted by Theorem \ref{relValueThm}, despite the fact that dividends and entry and exit in the form of IPOs and bankruptcies are important factors in equity markets.

Table \ref{commInfoTab} lists the start date and trading market for the 30 commodity futures in our 1969-2018 data set. These commodities encompass the four primary commodity domains (energy, metals, agriculture, and livestock) and span many bull and bear regimes. The table also reports the annualized average and standard deviation of daily log price changes over the lifetime of each futures contract. These data were obtained from the Pinnacle Data Corp., and report the two-month-ahead futures price of each commodity on each day that trading occurs, with the contracts rolled over each month.

Relative asset prices as defined by the $\theta_i$'s in \eqref{relPricesEq} are crucial to our theoretical framework and results. This concept, however, is essentially meaningless in the context of commodity prices, since different commodities are measured using different units such as barrels, bushels, and ounces. In order to give relative prices meaning in the context of commodity futures, we normalize all contracts with data on the January 2, 1969 start date so that their prices are equal to each other. All subsequent price changes occur without modification, meaning that price dynamics are unaffected by our normalization. For those commodities that enter into our data set after 1969, we set their initial log prices equal to the average log price of those commodities already in our data set on that date. After these commodities enter into the data set with a normalized price, all subsequent price changes occur without modification. The normalized commodity futures prices we construct are similar to price indexes, with all indexes set equal to each other on the initial start date and any indexes that enter after this start date set equal to the average of the existing indexes.

Figure \ref{relPricesFig} plots the normalized log commodity futures prices relative to the average for all 30 contracts in our data set from 1969-2018. This figure shows how normalized prices quickly disperse after the initial start date, with commodity futures prices constantly being affected by different shocks. After an initial period of rapid dispersion, however, the normalized commodity futures prices are roughly stable relative to each other with what looks like only modest increases in dispersion occurring after approximately 1980. These patterns are quantified and analyzed in our empirical analysis below.

\subsection{Portfolio Construction}
For our empirical analysis, it is necessary to construct a market portfolio strategy as defined by the weights \eqref{marketWeights}. In the context of commodity futures, the market portfolio cannot hold one share of each asset since futures contracts are simply agreements between two parties with no underlying asset held. This issue is easily resolved, however, since the market portfolio weights \eqref{marketWeights} are well-defined in the context of normalized commodity futures prices. In particular, \eqref{marketWeights} implies that the market portfolio invests in each commodity futures contract an amount that is proportional to the normalized price of that commodity. For this reason, we often refer to the market portfolio strategy as the price-weighted market portfolio strategy in the empirical analysis of this section. Note that the market portfolio of commodity futures requires no rebalancing, since price changes automatically cause the weights of each commodity in the portfolio to change in a manner that is consistent with price-weighting.

In addition to the price-weighted market portfolio, we construct equal- and CES-weighted portfolios of commodity futures as described in Corollaries \ref{returnsCor1} and \ref{returnsCor2}. The weights that define these two portfolio strategies are given by \eqref{equalWeightsEq} and \eqref{cesWeightsEq}, and are constructed using the normalized prices for which relative price is a meaningful concept. For the CES-weighted portfolio strategy, we set the value of $\g$ equal to $-0.5$, meaning that this portfolio places greater weight on lower-priced commodity futures than does the equal-weighted portfolio (see the discussion at the end of Section \ref{sec:examples}). Both the equal- and CES-weighted portfolio strategies require active rebalancing since, unlike the price-weighted market portfolio, their weights tend to deviate from \eqref{equalWeightsEq} and \eqref{cesWeightsEq} as prices change over time. Each portfolio is rebalanced once each month. Finally, even though our commodity futures data cover 1969-2018, the fact that we normalize prices by setting them equal to each other on the 1969 start date implies that the distribution of relative prices will have little meaning until these prices are given time to disburse. In a manner similar to the commodity value measure of \citet{Asness/Moskowitz/Pedersen:2013}, we wait five years before forming the equal-, CES-, and price-weighted market portfolios, so that these portfolios are constructed using normalized prices from 1974-2018.

\subsection{Results} \label{sec:results}
Figure \ref{returnsFig} plots the log cumulative returns for the price-weighted (market) portfolio strategy and the equal- and CES-weighted portfolio strategies from 1974-2018. The figure shows that all three portfolios have roughly similar behavior over time, but that the monthly rebalanced equal- and CES-weighted portfolios gradually and consistently outperform the price-weighted portfolio over time. These patterns are quantified in Table \ref{returnsTab}, which reports the annualized average and standard deviation of monthly returns for all three portfolio strategies over this time period. The monthly returns of the market portfolio have correlations of 0.95 and 0.89 with the returns of the equal- and CES-weighted portfolios, respectively. The outperformance of the equal- and CES-weighted portfolio strategies relative to the price-weighted market portfolio is also evident in Table \ref{relReturnsTab}, which reports the annualized average, standard deviation, and Sharpe ratio of monthly relative returns for the equal- and CES-weighted portfolios from 1974-2018. Tables \ref{returnsTab} and \ref{relReturnsTab} also report returns statistics for each decade in our long sample period.


The results of Tables \ref{returnsTab} and \ref{relReturnsTab} show that the equal- and CES-weighted portfolios consistently and substantially outperformed the price-weighted market portfolio over the 1974-2018 time period. This  outperformance is most evident from the high Sharpe ratios for the excess returns of both the equal- and CES-weighted portfolio, as shown in Table \ref{relReturnsTab}. Notably, both of these Sharpe ratios consistently rise above 0.5 after 1980, which is after most of the commodity futures contracts in our data set have started trading according to Table \ref{commInfoTab}. In other words, as the number of tradable assets $N$ rises, portfolio outperformance also rises. This is not surprising, since a greater number of tradable assets generally implies a greater value for the non-negative drift process $\a_F$, and it is this process that mostly determines relative portfolio returns over long horizons, as we demonstrate below.

The general theory of Section \ref{sec:theory} does not make any statements about the size of portfolio returns. Instead, this theory states that the returns for a large class of portfolio strategies relative to the market can be decomposed into a non-negative drift and changes in asset price dispersion, according to Theorem \ref{relValueThm}. When applied to the equal- and CES-weighted portfolios as in Corollaries \ref{returnsCor1} and \ref{returnsCor2}, this implies that the relative return of the equal-weighted portfolio strategy can be decomposed into the drift process adjusted by price dispersion, $-\a_G/G \geq 0$, and changes in the geometric mean of the asset price distribution, as in \eqref{relValueGEq}. Similarly, the relative return of the CES-weighted portfolio strategy can be decomposed into the drift process adjusted by price dispersion, $-\a_U/U \geq 0$, and changes in the CES function applied to the asset price distribution, as in \eqref{relValueUEq}.

In order to empirically investigate the decomposition \eqref{relValueEq} from Theorem \ref{relValueThm}, in Figure \ref{returnsEWFig} we plot the cumulative abnormal returns --- returns relative to the price-weighted market portfolio strategy --- of the equal-weighted portfolio strategy together with the cumulative value of the drift process adjusted by price dispersion, $-\a_G/G$, from 1974-2018. In addition, Figure \ref{dispEWFig} plots price dispersion as measured by minus the log of the geometric mean of the commodity price distribution, $G$, normalized relative to its average value for 1974-2018. In addition to the consistent and substantial outperformance of the equal-weighted portfolio relative to the price-weighted portfolio, these figures show that short-run relative return fluctuations for the equal-weighted portfolio closely follow fluctuations in commodity price dispersion while the long-run behavior of these relative returns closely follow the smooth adjusted drift. Indeed, there is a striking contrast between the high volatility of price dispersion in Figure \ref{dispEWFig} and the near-zero volatility of the adjusted drift in Figure \ref{returnsEWFig}. This is an important observation that is a direct prediction of Theorem \ref{relValueThm} and Corollary \ref{returnsCor1}, a point we discuss further below.

In addition to the contrasting volatilities of price dispersion and the adjusted drift, Figures \ref{returnsEWFig} and \ref{dispEWFig} show that the cumulative abnormal returns of the equal-weighted portfolio strategy are equal to the cumulative value of the adjusted drift process, $\int_0^T -\a_G(\theta(t))/G(\theta(t))$, plus the log of the geometric mean of the commodity price distribution, $\log G(\theta(T))$. Indeed, the solid black line in Figure \ref{returnsEWFig} (cumulative abnormal returns) is equal to the dashed red line in that same figure (cumulative value of the adjusted drift process) minus the line in Figure \ref{dispEWFig} (minus the log of the geometric mean of the commodity price distribution). This is exactly the relationship described by \eqref{relValueGEq} from Corollary \ref{returnsCor1}. We stress, however, that this empirical relationship is a necessary consequence of how the non-negative adjusted drift process, $-\a_G/G$, is calculated. For each day that we have data, the cumulative value of $-\a_G/G$ up to that day is calculated by subtracting the log value of the geometric mean of the commodity price distribution, $\log G$, from the cumulative abnormal returns, $\log V_g - \log V_m$, according to the identity \eqref{relValueGEq} from Corollary \ref{returnsCor1}.

Given that the empirical decomposition of Figures \ref{returnsEWFig} and \ref{dispEWFig} is constructed so that \eqref{relValueGEq} must hold, it is natural to wonder what the usefulness of this decomposition is. Some of this usefulness lies in the prediction that one part of this decomposition, the cumulative value of the adjusted drift process, $-\a_G/G$, is non-decreasing. This prediction is clearly confirmed by the smooth upward slope of the cumulative value of the adjusted drift line in Figure \ref{returnsEWFig}, and has implications for the long-run relative performance of the equal- and price-weighted portfolio strategies, as we discuss below. Most of the usefulness of the decomposition \eqref{relValueGEq} lies, however, in the prediction that the cumulative value of the adjusted drift process is a finite variation process, while the other part, the log value of the geometric mean of the commodity price distribution, $\log G$, is not. Recall from the discussions in Sections \ref{sec:setup} and \ref{sec:generalPort} that a finite variation process has zero quadratic variation, or zero instantaneous variance. To be clear, the prediction that the cumulative value of the adjusted drift process is a finite variation process is not a prediction that the sample variance of changes in the cumulative value of the adjusted drift process computed using monthly, discrete-time data will be equal to zero, but rather a prediction that these changes will be roughly constant over time.\footnote{Note that the sample variance of a continuous-time finite variation process computed using discrete-time data will never be exactly equal to zero.} In other words, our results predict that the cumulative value of the adjusted drift process will grow at a roughly constant rate with only few and small changes over time.

This smooth growth is exactly what is observed in the dashed red line of Figure \ref{returnsEWFig}, and, as mentioned above, is in stark contrast to the highly volatile behavior of price dispersion shown in Figure \ref{dispEWFig}. This contrast can be quantified by noting that the coefficient of variation of changes in the cumulative value of the adjusted drift process is equal to 3.14, while the coefficient of variation of changes in the log of minus price dispersion is equal to 124.64. These results confirm one of the key predictions of Theorem \ref{relValueThm} and Corollary \ref{returnsCor1}.

The positive and relatively constant values of the adjusted drift, $-\a_G/G$, over time have an important implication for the long-run return of the equal-weighted portfolio strategy relative to the price-weighted market portfolio strategy. Since \eqref{relReturnEq} and Theorem \ref{relValueThm} imply that relative returns can be decomposed into the adjusted drift and changes in asset price dispersion, a consistently positive adjusted drift over long time horizons can only be counterbalanced by consistently rising asset price dispersion. In the absence of such rising dispersion, the positive drift guarantees outperformance relative to the market. Therefore, the relatively small increase in commodity price dispersion shown in Figure \ref{dispEWFig} together with the positive values of the adjusted drift shown in Figure \ref{returnsEWFig} ensure that the equal-weighted portfolio outperforms the market portfolio over the 1974-2018 time period.

In a similar manner to Figure \ref{returnsEWFig}, Figure \ref{returnsCESFig} plots the cumulative abnormal returns of the CES-weighted portfolio strategy together with the cumulative value of the drift process adjusted by price dispersion, $-\a_U/U$, over the same 1974-2018 time period. Figure \ref{dispCESFig} plots price dispersion as measured by minus the log of the CES function applied to the asset price distribution, $U$, normalized relative to its average value over this time period. As with the equal-weighted portfolio, the cumulative value of the adjusted drift process in Figure \ref{returnsCESFig} is calculated using the identity \eqref{relValueUEq} from Corollary \ref{returnsCor2}. The results in Figures \ref{returnsCESFig} and \ref{dispCESFig} for the CES-weighted portfolio align closely with the results in Figures \ref{returnsEWFig} and \ref{dispEWFig} for the equal-weighted portfolio.

Indeed, Figures \ref{returnsCESFig} and \ref{dispCESFig} show that short-run relative return fluctuations for the CES-weighted portfolio strategy closely follow fluctuations in commodity price dispersion, as measured by minus the CES function, while the long-run behavior of these relative returns closely follow the smoothly accumulating adjusted drift. Much like in Figure \ref{returnsEWFig}, Figure \ref{returnsCESFig} shows that the cumulative value of the adjusted drift, which is a finite variation process according to Theorem \ref{relValueThm}, grows at a roughly constant rate over time, with a clear contrast between this stable growth and the rapid fluctuations in price dispersion shown in Figure \ref{dispCESFig}. As discussed above, the fact that the adjusted drift, $-\a_U/U$, is approximately constant over time is consistent with the prediction that its cumulative value is a finite variation process, thus confirming one of the key results in Theorem \ref{relValueThm} and Corollary \ref{returnsCor2}. Finally, Figure \ref{returnsCESFig} confirms the consistent and substantial outperformance of the CES-weighted portfolio relative to the price-weighted market portfolio like in Tables \ref{returnsTab} and \ref{relReturnsTab}. As with the equal-weighted portfolio, this long-run outperformance is predicted by \eqref{relValueEq} and Theorem \ref{relValueThm} given the relatively small change in price dispersion observed in Figure \ref{dispCESFig} compared to the large increase in the cumulative value of the adjusted drift observed in Figure \ref{returnsCESFig}.


\vskip 50pt

\section{Discussion} \label{sec:discuss}
The empirical results shown in Figures \ref{returnsEWFig}-\ref{dispCESFig} confirm the prediction of Theorem \ref{relValueThm} and Corollaries \ref{returnsCor1} and \ref{returnsCor2} that the drift component of the decomposition \eqref{relReturnEq} is nearly constant. As a consequence, this decomposition and its intuitive version \eqref{intuitiveEq} can be understood as
\begin{equation} \label{intuitiveEq2}
 \text{relative return} \;\; = \;\; \text{constant} \; - \; \text{change in asset price dispersion}.
\end{equation}
Furthermore, the results of Figures \ref{returnsEWFig} and \ref{returnsCESFig} clearly show that this non-negative constant drift is in fact positive in the case of the equal- and CES-weighted portfolios of commodity futures.


\subsection{The Price Dispersion Asset Pricing Factor}
Taken together, our theoretical and empirical results show that changes in asset price dispersion are key determinants of the returns for a large class of portfolios relative to the market. Thus, the distribution of relative asset prices, as measured by the dispersion of those prices, is necessarily an asset pricing factor. This fact is apparent from \eqref{intuitiveEq2}, which is setup in the same way as empirical asset pricing factor regression models \citep{Fama/French:1993}. Crucially, however, the theoretical results of Theorem \ref{relValueThm} that establish the intuitive version \eqref{intuitiveEq2} are achieved under minimal assumptions that should be consistent with essentially any model of asset pricing, meaning that this price dispersion factor is universal across different economic and financial environments. Our empirical results in Figures \ref{returnsEWFig}-\ref{dispCESFig} help to confirm this universality, especially when taken together with previous studies documenting the accuracy of the decomposition of Theorem \ref{relValueThm} for U.S.\ equity markets \citep{Vervuurt/Karatzas:2015}.

The generality of our results provides a novel workaround to many of the criticisms that have been raised recently about the empirical asset pricing literature. In particular, the implausibly high and rising number of factors and anomalies that this literature has identified has drawn a number of rebukes. \citet{Harvey/Liu/Zhu:2016}, for example, examine hundreds of different asset pricing factors and anomalies that have been uncovered using standard empirical methods and conclude that most are likely invalid. They also propose a substantially higher standard for statistical significance in future empirical analyses. Similarly, \citet{Bryzgalova:2016} shows that standard empirical methods applied to inappropriate risk factors in linear asset pricing models can generate spuriously high significance. \citet{Novy-Marx:2014} provides a different critique, demonstrating that many supposedly different anomalies are potentially driven by one or two common risk factors. All of these studies suggest that the extensive list of factors and anomalies proposed by the literature overstates the true number. The asset price dispersion factor established by Theorem \ref{relValueThm} is not derived using a specific economic model or a specific regression framework, but rather using general mathematical methods that represent asset prices as continuous semimartingales that are consistent with essentially all models and empirical specifications. For this reason, the price dispersion asset pricing factor we characterize is not subject to the criticisms of this literature.


\subsection{Price Dispersion, Value, and Naive Diversification}
The results of Theorem \ref{relValueThm} and Corollaries \ref{returnsCor1} and \ref{returnsCor2} offer new interpretations for both the value anomaly for commodities uncovered by \citet{Asness/Moskowitz/Pedersen:2013} and the surprising effectiveness of naive $1/N$ diversification described by \citet{DeMiguel/Garlappi/Uppal:2009}. The value anomaly for commodities of \citet{Asness/Moskowitz/Pedersen:2013} is constructed by ranking commodity futures prices relative to their average price five years earlier and then comparing the returns of portfolios of low-rank, high-value commodities to the returns of portfolios of high-rank, low-value commodities. This price ranking system is similar to the price normalization we implement based on the prices of commodity futures on the 1969 start date. Because the equal- and CES-weighted commodity futures portfolios put more weight on lower-normalized-priced commodities than does the price-weighted market portfolio, it follows that the predictable excess returns that we report in Tables \ref{returnsTab} and \ref{relReturnsTab} are similar to the value effect for commodities of \citet{Asness/Moskowitz/Pedersen:2013}.

A key difference between these results and our results, however, is that we link the predictable excess returns of the equal- and CES-weighted portfolio strategies to the approximate stability of commodity price dispersion as measured by minus the geometric mean and CES functions. This link is essential to understanding the economic and financial mechanisms behind these excess returns. Our results imply that such excess returns decrease as asset price dispersion rises, with positive excess returns ensured only if asset price dispersion does not rise substantially over time. Thus, any attempt to explain the predictable excess returns of Tables \ref{returnsTab} and \ref{relReturnsTab} and the related value anomaly for commodity futures must also explain the fluctuations in commodity price dispersion shown in Figures \ref{dispEWFig} and \ref{dispCESFig}, since these fluctuations are driving the excess return fluctuations. This conclusion points to the importance of a deeper understanding of the economic and financial mechanisms behind fluctuations in asset price dispersion.

\citet{DeMiguel/Garlappi/Uppal:2009} consider a number of different portfolio diversification strategies using several different data sets and show that a naive strategy of weighting each asset equally consistently outperforms almost all of the more sophisticated strategies. One of the strategies that is outperformed by naive $1/N$ diversification is a value-weighted market portfolio strategy based on CAPM. This strategy is equivalent to the market portfolio we defined in Section \ref{sec:port}, since our price weights are equivalent to their value weights. The results of Corollary \ref{returnsCor1}, therefore, can be applied to the excess returns of the equal-weighted portfolio relative to the value-weighted market portfolio uncovered by \citet{DeMiguel/Garlappi/Uppal:2009}. In particular, the corollary implies that this excess return is determined by a non-negative drift and the change in asset price dispersion as measured by minus the geometric mean of relative prices.

The decomposition of Corollary \ref{returnsCor1} provides a novel interpretation of the results of \citet{DeMiguel/Garlappi/Uppal:2009} in terms of the stability of the distributions of the various different empirical data sets these authors consider. As with the value anomaly for commodities, our theoretical decomposition implies that the excess returns of the naive $1/N$ diversification strategy decrease as asset price dispersion rises and are positive only if dispersion does not substantially rise over time. Thus, our results strongly suggest that the values of the various different assets considered by \citet{DeMiguel/Garlappi/Uppal:2009} are stable relative to each other, in a manner similar to what we observe for commodity futures prices in Figures \ref{dispEWFig} and \ref{dispCESFig}. In the absence of such stability, there would be no reason to expect the equal-weighted portfolio to outperform the value-weighted market portfolio as the authors observe. Once again, these conclusions highlight the importance of a deeper understanding of the economic and financial mechanisms behind fluctuations in asset price dispersion.

\subsection{Price Dispersion and Efficient Markets}
The relative return decomposition of Theorem \ref{relValueThm} reveals a novel dichotomy for markets in which dividends and asset entry/exit over time play small roles. On the one hand, the dispersion of asset prices may be approximately stable over time, in which case \eqref{intuitiveEq2} implies that predictable excess returns exist for a large class of portfolio strategies. This is the scenario we observe for commodity futures in Figures \ref{returnsEWFig}-\ref{dispCESFig}. In such markets, fluctuations in asset price dispersion are linked to excess returns via the accounting identity \eqref{relReturnEq} from Theorem \ref{relValueThm}. In a standard equilibrium model of asset pricing, these predictable excess returns may exist only if they are compensation for risk. This risk, in turn, is defined by an endogenous stochastic discount factor that is linked to the marginal utility of economic agents. It is not clear, however, how marginal utility might be linked to the dispersion of asset prices. It is also not clear why marginal utility should be higher when asset prices grow more dispersed, yet these are necessary implications of any standard asset pricing model in which price dispersion is asymptotically stable, according to our results.

On the other hand, the dispersion of asset prices may not be stable over time. In this case, asset price dispersion is consistently and rapidly rising, and the decomposition \eqref{intuitiveEq2} no longer predicts excess returns. Instead, this decomposition predicts rising price dispersion that cancels out the non-negative drift component of \eqref{intuitiveEq2} on average over time. The relative return decomposition of Theorem \ref{relValueThm} makes no predictions about the stability of asset price dispersion, so this possibility is not ruled out by our theoretical results. Nonetheless, it is notable that both our empirical results for commodity futures and the empirical results of \citet{Vervuurt/Karatzas:2015} for U.S.\ equities are inconsistent with this no-stability, no-excess-returns market structure. In light of these results, future work that examines the long-run properties of price dispersion in different asset markets and attempts to distinguish between the two sides of this dichotomy --- asymptotically stable markets with predictable excess returns versus asymptotically unstable markets without predictable excess returns --- may yield interesting new insights.

This novel dichotomy has several implications. First, it provides a new interpretation of market efficiency in terms of a constraint on cross-sectional asset price dynamics and the dispersion of relative asset prices. Either asset price dispersion rises consistently and rapidly over time, consistent with this constraint, or there exists a market inefficiency or a risk factor based on the decomposition \eqref{relReturnEq}. Second, it raises the possibility that well-known asset pricing risk factors such as value, momentum, and size \citep{Banz:1981,Fama/French:1993,Asness/Moskowitz/Pedersen:2013} may be interpretable in terms of the dynamics of asset price dispersion. To the extent that the decomposition of Theorem \eqref{relValueThm} is universal, the predictable excess returns underlying each of these risk factors may potentially be linked to a violation of the constraint on cross-sectional asset price dynamics and the dispersion of relative asset prices mentioned above. In other words, traditional asset pricing risk factors imply specific behavior for asset price dispersion over time and hence may be interpreted in terms of that specific behavior.

\vskip 50pt

\section{Conclusion} \label{sec:conclusion}

We represent asset prices as general continuous semimartingales and show that the returns on a large class of portfolio strategies relative to the market can be decomposed into a non-negative drift and changes in asset price dispersion. Because of the minimal assumptions underlying this result, our decomposition is little more than an accounting identity that is consistent with essentially any asset pricing model. We show that the drift component of our decomposition is approximately constant over time, thus implying that changes in asset price dispersion determine relative return fluctuations. This conclusion reveals an asset pricing factor --- changes in asset price dispersion --- that is universal across different economic and financial environments. We confirm our theoretical predictions using commodity futures, and show that equal- and constant-elasticity-of-substitution-weighted portfolios consistently and substantially outperformed the price-weighted market portfolio from 1974-2018.

\vskip 50pt

\begin{spacing}{1.2}

\bibliographystyle{chicago}
\bibliography{econ}

\end{spacing}

\vskip 50pt

\begin{spacing}{1.1}

\appendix

\section{Proofs} \label{proofs}

This appendix presents the proof of Lemmas \ref{dispersionLem} and \ref{alphaLem}, and Theorem \ref{relValueThm}.

\begin{proofLemma1}
Let $F$ be a measure of price dispersion. Suppose that $\theta, \theta' \in \D$, with
\begin{equation*}
 \max(\theta) = \max(\theta_1, \ldots, \theta_N) > \max(\theta'_1, \ldots, \theta'_N) = \max(\theta'),
\end{equation*}
and that $\theta_i = \theta'_i$ for all $i$ in some subset of $\{1, \ldots, N\}$ that contains $N - 2$ elements. Without loss of generality, we assume that $\max(\theta) = \theta_1$, $\max(\theta') = \theta'_1$, and $\theta_2 \neq \theta'_2$. Note that this implies that $\theta_1 - \theta'_1 = \theta'_2 - \theta_2 > 0$, since both $\theta$ and $\theta'$ must add up to one.

Let $\tilde{\theta} = (\theta_2, \theta_1, \theta_3, \ldots, \theta_N)$ be the relative price vector obtained by exchanging the first two elements of $\theta$. Because a measure of price dispersion is invariant to permutations of $\theta$ by definition, it follows that $F(\theta) = F(\tilde{\theta})$. Let
\begin{equation*}
 \b = \frac{\theta'_1 - \theta_2}{\theta_1 - \theta_2},
\end{equation*}
and note that $0 < \b < 1$ and
\begin{equation*}
 \b\theta + (1 - \b)\tilde{\theta} = \theta'.
\end{equation*}
Because $F$ is convex by definition, we have that
\begin{equation} \label{ineqApp}
 F(\theta) = \b F(\theta) + (1 - \b)F(\tilde{\theta}) \geq F(\b\theta + (1 - \b)\tilde{\theta}) = F(\theta').
\end{equation}
If $F$ is strictly convex, then the inequality in \eqref{ineqApp} becomes a strict inequality.
\end{proofLemma1}

\begin{proofLemma2}
For any continuous semimartingale vector $z$ with $z(t) \in \R^N$ for all $t$, let $d\Var(z)$ denote the covariance matrix $(d \langle z_i, z_j \rangle )_{1 \leq i, j \leq N}$. Any measure of price dispersion $F$ is convex by definition, so it follows that the Hessian matrix $H_F$ is positive semidefinite. For a given $t$, this implies that $H_F$ has eigenvalues $\l_1, \ldots, \l_N \geq 0$, with corresponding eigenvectors $e_k = (e_{k1}, \ldots, e_{kN})$, $k = 1, \ldots, N$, such that
\begin{equation} \label{alphaPfEq1}
 F_{ij}(\theta(t)) = \sum_{k=1}^N \l_k e_{ki}e_{kj},
\end{equation}
for $i, j = 1, \ldots, N$. Letting $x^T$ denote the transpose of a vector $x \in \R^N$, it follows that
\begin{equation*}
  \sum_{i, j = 1}^N F_{ij}(\theta(t)) \, d \langle \theta_i, \theta_j \rangle (t) = \sum_{k=1}^N \l_k \sum_{i, j = 1}^N e_{ki}e_{kj} \, d \langle \theta_i, \theta_j \rangle (t) =  \sum_{k=1}^N \l_k e_k \, d\Var(\theta)(t) \, e^T_k \geq 0,
\end{equation*}
for all $t$, since the covariance matrix $d\Var(\theta)$ is positive semidefinite. Of course, this implies that $\a_F \geq 0$ as well.

Now suppose that $\operatorname{rank}(F) > 1$ and that $d\Var(p)(t)$ is positive definite, for all $t$. Note that
\begin{equation*}
 d\Var(\log p)(t) = p(t)\,d\Var(p)(t)\,p^T(t),
\end{equation*}
for all $t$, so that $d\Var(\log p)$ is positive definite if $d\Var(p)$ is positive definite. Furthermore, \citet{Fernholz:2002} shows that if $d\Var(\log p)$ is positive definite, then $d\Var(\log \theta)$ is positive semidefinite with null space generated by $\theta$. According to \eqref{alphaPfEq1},
\begin{align*}
  \sum_{i, j = 1}^N F_{ij}(\theta(t)) \, d \langle \theta_i, \theta_j \rangle (t) & = \sum_{k=1}^N \l_k \sum_{i, j = 1}^N e_{ki}e_{kj}\theta_i(t)\theta_j(t) \, d \langle \log\theta_i, \log\theta_j \rangle (t) \\
  & =  \sum_{k=1}^N \l_k e_k\theta(t) \, d\Var(\log \theta)(t) \, \theta^T(t)e^T_k,
\end{align*}
for all $t$. We know that at least two of the eigenvalues $\l_1, \ldots, \l_N$ are positive since $\operatorname{rank}(F) > 1$ and $H_F$ is positive semidefinite, so we assume, without loss of generality, that $\l_1, \l_2 > 0$. It follows, then, that
 \begin{equation*}
  \sum_{i, j = 1}^N F_{ij}(\theta(t)) \, d \langle \theta_i, \theta_j \rangle (t) = \sum_{k=1}^N \l_k e_k\theta(t) \, d\Var(\log \theta)(t) \, \theta^T(t)e^T_k > 0,
 \end{equation*}
for all $t$, since
\begin{equation*}
 e_k\theta(t) \, d\Var(\log \theta)(t) \, \theta^T(t)e^T_k > 0,
\end{equation*}
for either $k = 1$ or $k = 2$, for all $t$. This implies that $\a_F > 0$.
\end{proofLemma2}

\begin{proofThm1}
Theorem \ref{relValueThm} follows from the more general results in Proposition 4.8 of \citet{Karatzas/Ruf:2017}. To see this, let $\tilde{F} = -F$, and note that by definition
\begin{equation*}
 \a_{\tilde{F}}(\theta(t)) = -\a_F(\theta(t)),
\end{equation*}
for all $t$. The function $\tilde{F}$ is regular according to Definition 3.1 of \citet{Karatzas/Ruf:2017} because it is continuous and concave and we have assumed that prices are always positive. Furthermore, because $\tilde{F}$ is twice continuously differentiable, it follows that the finite variation process $\Gamma^{\tilde{F}}$, defined in (3.2) of \citet{Karatzas/Ruf:2017}, satisfies
\begin{equation*}
 d\Gamma^{\tilde{F}}(t) = -\a_{\tilde{F}}(\theta(t)) = \a_F(\theta(t)),
\end{equation*}
for all $t$. Proposition 4.8 then yields the result \eqref{relValueEq} of Theorem \ref{relValueThm}.

In addition to this proof, we wish to informally derive \eqref{pfSketchEq3}, which played an important role in the proof sketch of Theorem \ref{relValueThm} in Section \ref{sec:proof}. This equation states that
\begin{equation*}
 d\frac{V_s(t)}{V_m(t)} = \sum_{i=1}^Ns_i(t) \, d\theta_i(t),
\end{equation*}
for all $t$. For notational simplicity, we drop all time dependences in this informal derivation and simply write $f$ for $f(t)$, for any function of time $f$. For all $i = 1, \ldots, N$, the second-order Taylor approximation of $\log \theta_i$ is given by
\begin{equation} \label{pfAppendixEq1}
 d\log \theta_i \approx \frac{d\theta_i}{\theta_i} - \frac{1}{2}\frac{d^2\theta_i}{\theta^2_i},
\end{equation}
which implies that
\begin{align}
 \sum_{i=1}^Ns_i \, d\theta_i & \approx \sum_{i=1}^N s_i \left( \theta_i \, d\log \theta_i + \frac{1}{2}\frac{d^2\theta_i}{\theta_i} \right) \notag \\
  & = \sum_{i=1}^N \frac{V_s}{V_m} w^s_i \, d\log \theta_i + \sum_{i=1}^N \frac{V_s}{2V_m} w^s_i \frac{d^2\theta_i}{\theta^2_i}, \label{pfAppendixEq2}
\end{align}
where the last equality follows from the definition of portfolio strategy weights \eqref{weightsEq}.

According to \eqref{pfAppendixEq1}, for all $i, j = 1, \ldots, N$,
\begin{align}
 \frac{d\theta_i \, d\theta_j}{\theta_i\theta_j} & \approx \left( d\log \theta_i + \frac{1}{2}\frac{d^2\theta_i}{\theta^2_i} \right)\left( d\log \theta_j + \frac{1}{2}\frac{d^2\theta_j}{\theta^2_j} \right) \notag \\
  & = d\log \theta_i \, d\log \theta_j + \frac{1}{2}\frac{d^2\theta_i \, d\log \theta_j}{\theta^2_i} + \frac{1}{2}\frac{d^2\theta_j \, d\log \theta_i}{\theta^2_i} + \frac{1}{4}\frac{d^2\theta_i \, d^2\theta_j}{\theta^2_i\theta^2_j} \notag \\
  & \approx d\log \theta_i \, d\log \theta_j. \label{pfAppendixEq3}
\end{align}
In a manner consistent with our use of second-order Taylor approximations in this proof sketch, the bottom approximation \eqref{pfAppendixEq3} follows by assuming that all third-order terms or higher --- those terms with the differential operator $d$ raised to a power greater than or equal to three --- are approximately equal to zero. Of course, in continuous time \ito's lemma ensures that these higher-order terms are in fact equal to zero. If we substitute \eqref{pfAppendixEq3} into \eqref{pfAppendixEq2}, then we have
\begin{equation} \label{pfAppendixEq100}
  \sum_{i=1}^Ns_i \, d\theta_i \approx \sum_{i=1}^N \frac{V_s}{V_m} w^s_i \, d\log \theta_i + \sum_{i=1}^N \frac{V_s}{2V_m} w^s_i \, d^2\log \theta_i.
\end{equation}

The second-order Taylor approximation of $V_s/V_m$ is given by
\begin{equation*}
 d\frac{V_s}{V_m} \approx \frac{V_s}{V_m} \, d\log (V_s/V_m) + \frac{1}{2} \frac{d^2 (V_s/V_m)}{V_s/V_m},
\end{equation*}
which is equivalent to
\begin{equation} \label{pfAppendixEq4}
 \frac{d (V_s/V_m)}{V_s/V_m} \approx d\log (V_s/V_m) + \frac{1}{2} \frac{d^2 (V_s/V_m)}{(V_s/V_m)^2}.
\end{equation}
According to \eqref{weightsEq} and \eqref{selfFinanceEq},
\begin{equation*}
 \frac{dV_s}{V_s} \approx \sum_{i=1}^N s_i \, \frac{dp_i}{V_s} = \sum_{i=1}^N w^s_i \, \frac{dp_i}{p_i},
\end{equation*}
which implies that the second-order Taylor approximation of $\log V_s$ can be written as
\begin{align*}
 d \log V_s & \approx \frac{dV_s}{V_s} - \frac{1}{2}\frac{d^2 V_s}{V^2_s}  \\
 &  \approx \sum_{i=1}^N w^s_i \, \frac{dp_i}{p_i} - \frac{1}{2}\sum_{i,j = 1}^N w^s_iw^s_j \, \frac{dp_i \, dp_j}{p_ip_j} \\
 & \approx \sum_{i=1}^N w^s_i \, d\log p_i + \frac{1}{2}\sum_{i=1}^N w^s_i \, \frac{d^2p_i}{p^2_i} - \frac{1}{2}\sum_{i,j = 1}^N w^s_iw^s_j \, \frac{dp_i \, dp_j}{p_ip_j},
\end{align*}
where the last equality follows from \eqref{pfAppendixEq1}. It follows that
\begin{align}
 d \log (V_s/ V_m) & \approx \sum_{i=1}^N w^s_i \, d\log \theta_i + \frac{1}{2}\sum_{i=1}^N w^s_i \, \frac{d^2p_i}{p^2_i} - \frac{1}{2}\sum_{i,j = 1}^N w^s_iw^s_j \, \frac{dp_i \, dp_j}{p_ip_j} \\
  & \approx \sum_{i=1}^N w^s_i \, d\log \theta_i + \frac{1}{2}\sum_{i=1}^N w^s_i \, d^2\log p_i - \frac{1}{2}\sum_{i,j = 1}^N w^s_iw^s_j \, d\log p_i \, d\log p_j, \label{pfAppendixEq5}
\end{align}
where the last equality follows from \eqref{pfAppendixEq3}. Since $V_m = p_1 + \cdots + p_N$ by \eqref{marketValueEq}, we have that
\begin{align}
  \sum_{i, j = 1}^N w^s_iw^s_j \, d\log \theta_i \, d\log \theta_j & = \sum_{i, j = 1}^N w^s_iw^s_j(d\log p_i - d\log V_m(p))(d\log p_j - d\log V_m) \notag \\
  & =  \sum_{i,j = 1}^N w^s_iw^s_j \, d\log p_i \, d\log p_j - 2\sum_{i=1}^N w^s_i \,d \log p_i \, d\log V_m + d^2\log V_m. \label{pfAppendixEq6}
\end{align}
A similar argument to the one above proves that
\begin{equation} \label{pfAppendixEq7}
 \sum_{i = 1}^N w^s_i \, d^2\log \theta_i = \sum_{i = 1}^N w^s_i \, d^2\log p_i -  2\sum_{i=1}^N w^s_i \, d\log p_i \, d\log V_m + d^2\log V_m.
\end{equation}
Substituting \eqref{pfAppendixEq6} and \eqref{pfAppendixEq7} into \eqref{pfAppendixEq5} yields
\begin{equation} \label{pfAppendixEq8}
 d \log ( V_s/V_m ) \approx \sum_{i=1}^N w^s_i \, d\log \theta_i + \frac{1}{2}\sum_{i=1}^N w^s_i \, d^2\log \theta_i - \frac{1}{2}\sum_{i,j = 1}^N w^s_iw^s_j \, d\log \theta_i \, d\log \theta_j,
\end{equation}
since the last two terms of \eqref{pfAppendixEq6} and \eqref{pfAppendixEq7} cancel each other out.

Suppose that \eqref{pfSketchEq3} holds. In this case, we have
\begin{equation}
 \frac{d(V_s/V_m)}{V_s/V_m} = \sum_{i=1}^N \frac{s_iV_m}{V_s} \,d\theta_i =  \sum_{i=1}^N w^s_i \, \frac{d\theta_i}{\theta_i},
\end{equation}
and hence also
\begin{align}
  \frac{d^2(V_s/V_m)}{(V_s/V_m)^2} & = \left( \sum_{i=1}^N w^s_i \, \frac{d\theta_i}{\theta_i} \right) \left( \sum_{i=1}^N w^s_i \, \frac{d\theta_i}{\theta_i} \right) \notag \\
  & = \sum_{i,j = 1}^N w^s_iw^s_j \, \frac{d\theta_i \, d\theta_j}{\theta_i \theta_j} \notag \\
  & \approx \sum_{i,j = 1}^N w^s_iw^s_j \, d\log \theta_i \, d\log \theta_j. \label{pfAppendixEq9}
\end{align}
If we substitute \eqref{pfAppendixEq8} and \eqref{pfAppendixEq9} into \eqref{pfAppendixEq4}, we have
\begin{equation} \label{pfAppendixEq10}
 \frac{d(V_s/V_m)}{V_s/V_m} \approx \sum_{i=1}^N w^s_i \, d\log \theta_i + \frac{1}{2}\sum_{i=1}^N w^s_i \, d^2\log \theta_i,
\end{equation}
since the last term on the right-hand side of \eqref{pfAppendixEq8} cancels \eqref{pfAppendixEq9}. Of course, \eqref{pfAppendixEq10} is equivalent to \eqref{pfAppendixEq100}, which proves \eqref{pfSketchEq3}.
\end{proofThm1}

\vskip 25pt

\section{Supplemental Material} \label{supp}

The decomposition of returns in terms logarithms in Theorem \ref{relValueThm} is convenient and tractable, but not necessary. In particular, it is possible to decompose returns into the same price dispersion and drift components in a purely additive way that does not rely on logarithms.

\begin{thm} \label{relValueThmApp}
Let $F$ be a measure of price dispersion. Then, the portfolio strategy $s = (s_1, \ldots, s_N)$ with
\begin{equation} \label{strategyEq1}
 s_i(t) = \sum_{j=1}^N \theta_j(t)F_j(\theta(t)) - F_i(\theta(t)) + V_s(t)/V_m(t),
\end{equation}
for each $i = 1, \ldots, N$, has a value process $V_s$ that satisfies
\begin{equation} \label{relValueEq1}
 \frac{V_s(T)}{V_m(T)} = \intT\a_F(\theta(t)) - F(\theta(T)),
\end{equation}
for all $T$.
\end{thm}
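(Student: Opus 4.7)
My plan is to parallel the proof sketch of Theorem \ref{relValueThm} in Section \ref{sec:proof}, but work directly with $F$ and $V_s/V_m$ rather than with their logarithms, so that the resulting identity is additive. First, I would verify that \eqref{strategyEq1} defines a valid portfolio strategy in the sense of \eqref{valueEq}: using $p_i(t) = \theta_i(t) V_m(t)$ and $\sum_{i=1}^N \theta_i(t) = 1$, the sum $\sum_{i=1}^N p_i(t) s_i(t)$ reduces to $V_m(t)\sum_{i=1}^N \theta_i(t) s_i(t)$, and after the $\sum_j \theta_j F_j$ contribution cancels against $\sum_i \theta_i F_i$, one is left with exactly $V_s(t)$, as required.

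Next, I would invoke the identity $d(V_s(t)/V_m(t)) = \sum_{i=1}^N s_i(t) \, d\theta_i(t)$ that is derived for any self-financing portfolio in the appendix proof of Theorem \ref{relValueThm}. Substituting \eqref{strategyEq1}, the two summands $\sum_j \theta_j(t) F_j(\theta(t))$ and $V_s(t)/V_m(t)$ drop out because they do not depend on $i$ and $\sum_{i=1}^N d\theta_i(t) = 0$, leaving
\begin{equation*}
d(V_s(t)/V_m(t)) = -\sum_{i=1}^N F_i(\theta(t)) \, d\theta_i(t).
\end{equation*}
Applying \ito's lemma to $F(\theta(t))$ as in \eqref{pfSketchEq1} and recognizing the second-order term as $\a_F(\theta(t))$ by Definition \ref{driftDef} yields $dF(\theta(t)) = \sum_{i=1}^N F_i(\theta(t)) \, d\theta_i(t) + \a_F(\theta(t))$. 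Combining these two displays gives
\begin{equation*}
d(V_s(t)/V_m(t)) = \a_F(\theta(t)) - dF(\theta(t)),
\end{equation*}
and integrating from $0$ to $T$ with initial value $V_s(0)/V_m(0) = -F(\theta(0))$ produces \eqref{relValueEq1}.

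The main obstacle I anticipate is administrative rather than conceptual: the initial normalization \eqref{valueNormalizationEq} used elsewhere in the paper must be implicitly replaced here by $V_s(0) = -F(\theta(0))V_m(0)$ to make \eqref{relValueEq1} hold at $T=0$, paralleling the implicit normalization underlying Theorem \ref{relValueThm}. It is also worth checking that the appendix derivation of $d(V_s/V_m) = \sum_i s_i\, d\theta_i$ does not rely on the sign condition $F < 0$ or on any logarithmic manipulation, which I believe to be the case since it uses only the definition of weights \eqref{weightsEq}, the self-financibility constraint \eqref{selfFinanceEq}, and \ito's lemma. Notably, the additive form \eqref{relValueEq1} is in some sense cleaner than \eqref{relValueEq} precisely because it does not require the hypothesis $F(\theta) < 0$ needed to take the logarithm in Theorem \ref{relValueThm}; the cancellation $\sum_i d\theta_i = 0$ takes on the role that the normalizing factor $c(t)$ in \eqref{pfSketchEq2} plays in the logarithmic proof.
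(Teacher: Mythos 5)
Your proof is correct, but it takes a genuinely different route from the paper's. The paper proves Theorem \ref{relValueThmApp} purely by reduction to an external result: it sets $\tilde{F} = -F$, observes that $\tilde{F}$ is concave and twice continuously differentiable and hence ``regular'' in the sense of Definition 3.1 of \citet{Karatzas/Ruf:2017} with $d\Gamma^{\tilde{F}}(t) = \a_F(\theta(t))$, and then quotes Proposition 4.4 of that paper to obtain \eqref{relValueEq1}. You instead give a self-contained computation in the spirit of the Section \ref{sec:proof} proof sketch: consistency of \eqref{strategyEq1} with \eqref{valueEq} via $\sum_i \theta_i = 1$, the num\'eraire-change identity $d(V_s/V_m) = \sum_i s_i\, d\theta_i$ from the appendix (which indeed uses only \eqref{weightsEq}, \eqref{selfFinanceEq}, and \ito\ calculus, not the sign of $F$), cancellation of the $i$-independent summands because $\sum_i d\theta_i = 0$, and \ito's lemma applied to $F(\theta(t))$ so that $d(V_s/V_m) = \a_F(\theta(t)) - dF(\theta(t))$. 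This is arguably cleaner than the paper's own sketch for the logarithmic Theorem \ref{relValueThm}, since no product rule or stochastic exponential is needed, and you are right that the hypothesis $F<0$ plays no role here. What the citation-based proof buys is rigor on the points your argument treats informally: the existence and uniqueness of a self-financing value process $V_s$ satisfying the implicit relation \eqref{strategyEq1} (in which $V_s$ appears on both sides), and a fully rigorous version of $d(V_s/V_m) = \sum_i s_i\, d\theta_i$, which the appendix only derives via second-order Taylor approximations. Your observation that \eqref{relValueEq1} implicitly requires the initial normalization $V_s(0) = -F(\theta(0))\,V_m(0)$, in tension with \eqref{valueNormalizationEq}, is accurate and is a detail the paper's proof does not address.
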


\begin{proof}
As with Theorem \ref{relValueThm}, Theorem \ref{relValueThmApp} follows from the more general results in Proposition 4.4 of \citet{Karatzas/Ruf:2017}. To see this, let $\tilde{F} = -F$, and note that by definition
\begin{equation*}
 \a_{\tilde{F}}(\theta(t)) = -\a_F(\theta(t)),
\end{equation*}
for all $t$. The function $\tilde{F}$ is regular according to Definition 3.1 of \citet{Karatzas/Ruf:2017} because it is continuous and concave and we have assumed that prices are always positive. Furthermore, because $\tilde{F}$ is twice continuously differentiable, it follows that the finite variation process $\Gamma^{\tilde{F}}$, defined in (3.2) of \citet{Karatzas/Ruf:2017}, satisfies
\begin{equation*}
 d\Gamma^{\tilde{F}}(t) = -\a_{\tilde{F}}(\theta(t)) = \a_F(\theta(t)),
\end{equation*}
for all $t$. Proposition 4.4 then yields the result \eqref{relValueEq1} of Theorem \ref{relValueThmApp}.
\end{proof}

In the same way that Corollaries \ref{returnsCor1} and \ref{returnsCor2} followed from Theorem \ref{relValueThm}, this next corollary follows directly from Theorem \ref{relValueThmApp}.

\begin{cor} \label{returnsCorApp}
The portfolio strategy $g' = (g'_1, \ldots, g'_N)$ with
\begin{equation}
 g'_i(t) = G(\theta(t)) \left( \frac{1}{N\theta_i(t)} - 1 \right) + \frac{V_{g'}(t)}{V_m(t)},
\end{equation}
for each $i = 1, \ldots, N$, has a value process $V_{g'}$ that satisfies
\begin{equation} \label{relValueGMEq1}
 \frac{V_{g'}(T)}{V_m(T)} = - \intT\a_G(\theta(t)) + G(\theta(T)),
\end{equation}
for all $T$. The portfolio strategy $u' = (u'_1, \ldots, u'_N)$ with
\begin{equation}
 u'_i(t) = U(\theta(t)) \big( U^{-\g}(\theta(t))\theta^{\g-1}_i(t) - 1 \big) + \frac{V_{u'}(t)}{V_m(t)},
\end{equation}
for each $i = 1, \ldots, N$, has a value process $V_{u'}$ that satisfies
\begin{equation} \label{relValueEntEq1}
 \frac{V_{u'}(T)}{V_m(T)} = - \intT\a_U(t) + U(\theta(T)) ,
\end{equation}
for all $T$.
\end{cor}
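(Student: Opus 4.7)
My plan is to derive Corollary \ref{returnsCorApp} as a direct specialization of Theorem \ref{relValueThmApp} to the two particular measures of price dispersion $F = -G$ and $F = -U$ introduced in Section \ref{sec:distribution}. Both choices qualify as measures of price dispersion in the sense of Definition \ref{dispersionDef}: the concavity of $G$ and $U$ noted after \eqref{geometricMeanEq} and \eqref{cesEq} makes $-G$ and $-U$ convex, and both are manifestly symmetric under permutation and twice continuously differentiable on $\D$. Applying Theorem \ref{relValueThmApp} with each choice produces a portfolio whose shares are given by \eqref{strategyEq1} and whose relative value satisfies \eqref{relValueEq1}, so what remains is to verify that these general expressions reduce to the stated formulas for $g'_i, u'_i$ and for $V_{g'}/V_m, V_{u'}/V_m$.

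The central step is evaluating the partial derivatives and the Euler-type sum $\sum_j \theta_j F_j$ for each choice of $F$. Differentiating $G = (\theta_1 \cdots \theta_N)^{1/N}$ gives $G_i = G/(N\theta_i)$, and differentiating $U = (\sum_j \theta_j^{\g})^{1/\g}$ gives $U_i = U^{1-\g}\theta_i^{\g-1}$. Because both $G$ and $U$ are homogeneous of degree one in $\theta$, Euler's theorem yields $\sum_j \theta_j G_j = G$ and $\sum_j \theta_j U_j = U$, so that $\sum_j \theta_j F_j$ equals $-G$ in the first case and $-U$ in the second. Substituting these into \eqref{strategyEq1} then gives, after a one-line rearrangement, $s_i = G(1/(N\theta_i) - 1) + V_s/V_m$ and $s_i = U(U^{-\g}\theta_i^{\g-1} - 1) + V_s/V_m$, matching the formulas for $g'_i$ and $u'_i$ exactly.

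For the value-process identities, I will exploit that the Hessian of $-G$ is the negative of the Hessian of $G$, so by Definition \ref{driftDef} the drifts satisfy $\a_{-G} = -\a_G$ and, by the identical argument, $\a_{-U} = -\a_U$. Feeding these together with $F(\theta(T)) = -G(\theta(T))$ (respectively $-U(\theta(T))$) into \eqref{relValueEq1} yields \eqref{relValueGMEq1} and \eqref{relValueEntEq1}, completing the proof. There is no substantive obstacle beyond bookkeeping; the one point at which a sign error could creep in is the translation between the convex dispersion measures $-G, -U$ that are plugged into the theorem and the concave functions $G, U$ that appear in the corollary, and the combination of Euler's identity with the Hessian sign flip handles this systematically.
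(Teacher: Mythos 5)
Your proposal is correct and follows essentially the same route as the paper, which simply notes that the corollary is a direct specialization of Theorem \ref{relValueThmApp} to $F = -G$ and $F = -U$; your explicit computations ($G_i = G/(N\theta_i)$, $U_i = U^{1-\gamma}\theta_i^{\gamma-1}$, the Euler identities $\sum_j \theta_j G_j = G$ and $\sum_j \theta_j U_j = U$, and the linearity of $\alpha_F$ in the Hessian giving $\alpha_{-G} = -\alpha_G$ and $\alpha_{-U} = -\alpha_U$) all check out and correctly reproduce the stated share formulas and value-process identities.
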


\end{spacing}

\pagebreak

\begin{table}[H]
\vspace{0pt}
\begin{center}
\setlength{\extrarowheight}{3pt}
\begin{tabular} {| l ||c|c|c|}

\hline

      Commodity  & Exchange           & Start      &   Average and Standard Deviation      \\
                          &   where Traded   & Date    &   of Log Price Changes                          \\

\hline

  Soybean Meal      &  CBOT        &  1/1969  & 0.034 (0.303) \\
  Soybean Oil          &  CBOT       &  1/1969  & 0.027 (0.289)  \\
  Soybeans             &  CBOT        &  1/1969  & 0.027 (0.261)   \\
  Wheat                   &  CBOT        &  1/1969  & 0.027 (0.292)  \\
  Corn                     &  CBOT        &  1/1970  & 0.025 (0.260)  \\
  Live Hogs             &  CME          &  1/1970  & 0.022 (0.330)  \\
  Live Cattle            &  CME          &  1/1971  & 0.028 (0.201)  \\
  Cotton                  &  NYBOT      &  1/1973  & 0.018 (0.288)  \\
  Orange Juice       &  CEC           &  1/1973  & 0.029 (0.305)  \\
  Platinum              &  NYMEX      &  1/1973  & 0.041 (0.278)  \\
  Silver                   &  COMEX      &  1/1973  & 0.046 (0.320)  \\
  Coffee                  &  CSC           &  1/1974  & 0.013 (0.360)  \\
  Lumber                &  CME           &  1/1974  & 0.035 (0.326)  \\
  Gold                     &  COMEX     &  1/1975  & 0.045 (0.204)  \\
  Oats                     &  CBOT        &  1/1975  & 0.009 (0.345)  \\
  Sugar                   &  CSC          &  1/1975  & -0.032 (0.408)  \\
  Wheat, K.C.         &  KCBT        &  1/1977  & 0.016 (0.251)  \\
  Feeder Cattle      &  CME          &  1/1978  & 0.028 (0.169)  \\
  Heating Oil          &  NYMEX     &  1/1980  & 0.024 (0.328)  \\
  Cocoa                 &  CSC           &  1/1981  & 0.008 (0.301)  \\
  Wheat, Minn.      &  MGE          &  1/1981  & 0.007 (0.233)  \\
  Palladium            &  NYMEX     &  1/1983  & 0.065 (0.326)  \\
  Crude Oil            &  NYMEX     &  1/1984  & 0.026 (0.354)  \\
  RBOB Gasoline  &  NYMEX     &  1/1985 & 0.034 (0.348)  \\
  Rough Rice         &  CBOT       &  1/1987  & 0.035 (0.277)  \\
  Copper                &  COMEX    &  1/1989  & 0.027 (0.256)  \\
  Natural Gas         &  NYMEX    &  1/1991  & 0.014 (0.515)  \\
  Milk                      &  CME         &  9/1997  & 0.016 (0.277)  \\
  Brent Crude Oil   &  ICE           &  8/2008  & -0.042 (0.332)  \\
  Brent Gasoil        &  ICE           &  8/2008  & -0.047 (0.287)  \\

\hline

\end{tabular}
\end{center}
\vspace{-5pt} \caption{List of commodity futures contracts along with the exchange where each commodity is traded, the date each commodity started trading, and the annualized average and standard deviation (in parentheses) of daily log price changes for each commodity.}
\label{commInfoTab}
\end{table}

\begin{table}[H]
\vspace{15pt}
\begin{center}
\setlength{\extrarowheight}{3pt}
\begin{tabular} {|c||c|c|c|}

\hline

     & Price-Weighted (market)  & Equal-Weighted & CES-Weighted  \\
     & Portfolio                           & Portfolio             & Portfolio           \\

\hline

  1974-2018       &                          3.58\%    \hspace{8pt} (15.15)    &                           6.35\%    \hspace{8pt} (13.60)   &                            7.93\%    \hspace{8pt} (13.75)   \\
  1974-1980       &                          10.94\%  \hspace{3pt} (20.81)    &                           11.97\%  \hspace{3pt} (19.54)   &                            12.55\%  \hspace{3pt} (19.52)  \\
  1980-1990       &                          -2.68\%  \hspace{4pt} (15.00)     &                           1.92\%    \hspace{8pt} (12.62)   &                           4.62\%     \hspace{8pt} (13.40)  \\
  1990-2000       &  \hspace{-10pt}  0.43\%    \hspace{8pt} (7.76)     &  \hspace{-10pt}  2.61\%    \hspace{8pt} (7.24)     &  \hspace{-10pt}  3.62\%     \hspace{8pt} (7.34)  \\
  2000-2010       &                           7.79\%    \hspace{8pt} (16.21)   &                           11.99\%  \hspace{3pt} (14.25)   &                            14.18\%   \hspace{3pt} (13.98)  \\
  2010-2018       &                           1.76\%    \hspace{8pt} (12.97)   &                           3.81\%    \hspace{8pt} (12.65)   &                            5.01\%     \hspace{8pt} (13.05)  \\

\hline

\end{tabular}
\end{center}
\vspace{-5pt} \caption{Annualized average and standard deviation (in parentheses) of monthly returns for price-weighted (market) portfolio and equal- and CES-weighted portfolios, 1974-2018.}
\label{returnsTab}
\end{table}

%
%
%
%
%
%
%
%
%
%

\begin{table}[H]
\vspace{50pt}
\begin{center}
\setlength{\extrarowheight}{3pt}
\begin{tabular} {|c||cc|cc|}

\hline

     &  \multicolumn{2}{c|}{Equal-Weighted} & \multicolumn{2}{c|}{CES-Weighted}  \\
     &  \multicolumn{2}{c|}{Portfolio}             & \multicolumn{2}{c|}{Portfolio}           \\
     & Average (st. dev.) & Sharpe ratio       & Average (st. dev.) & Sharpe ratio    \\

\hline

  1974-2018       &  2.77\%    \hspace{4pt} (4.99)  & 0.55    &    4.34\% \hspace{4pt} (7.01)  & 0.62  \\
  1974-1980       &  1.03\%    \hspace{4pt} (4.03)  & 0.26    &    1.61\% \hspace{4pt} (5.96)  & 0.27 \\
  1980-1990       &  4.60\%    \hspace{4pt} (5.81)  & 0.79    &    7.30\% \hspace{4pt} (8.60)  & 0.85 \\
  1990-2000       &  2.18\%    \hspace{4pt} (2.77)  & 0.79    &    3.19\% \hspace{4pt} (4.34)  & 0.73 \\
  2000-2010       &  4.20\%    \hspace{4pt} (6.11)  & 0.69    &    6.39\% \hspace{4pt} (8.10)  & 0.79 \\
  2010-2018       &  2.06\%    \hspace{4pt} (4.22)  & 0.49    &    3.25\% \hspace{4pt} (5.85)  & 0.56 \\

\hline

\end{tabular}
\end{center}
\vspace{-5pt} \caption{Annualized average, standard deviation (in parentheses), and Sharpe ratio of monthly returns for equal- and CES-weighted portfolios relative to the price-weighted (market) portfolio, 1974-2018.}
\label{relReturnsTab}
\end{table}

\newpage

\begin{figure}[H]
\begin{center}
\vspace{-15pt}
\hspace{-20pt}\scalebox{0.65}{ {\includegraphics{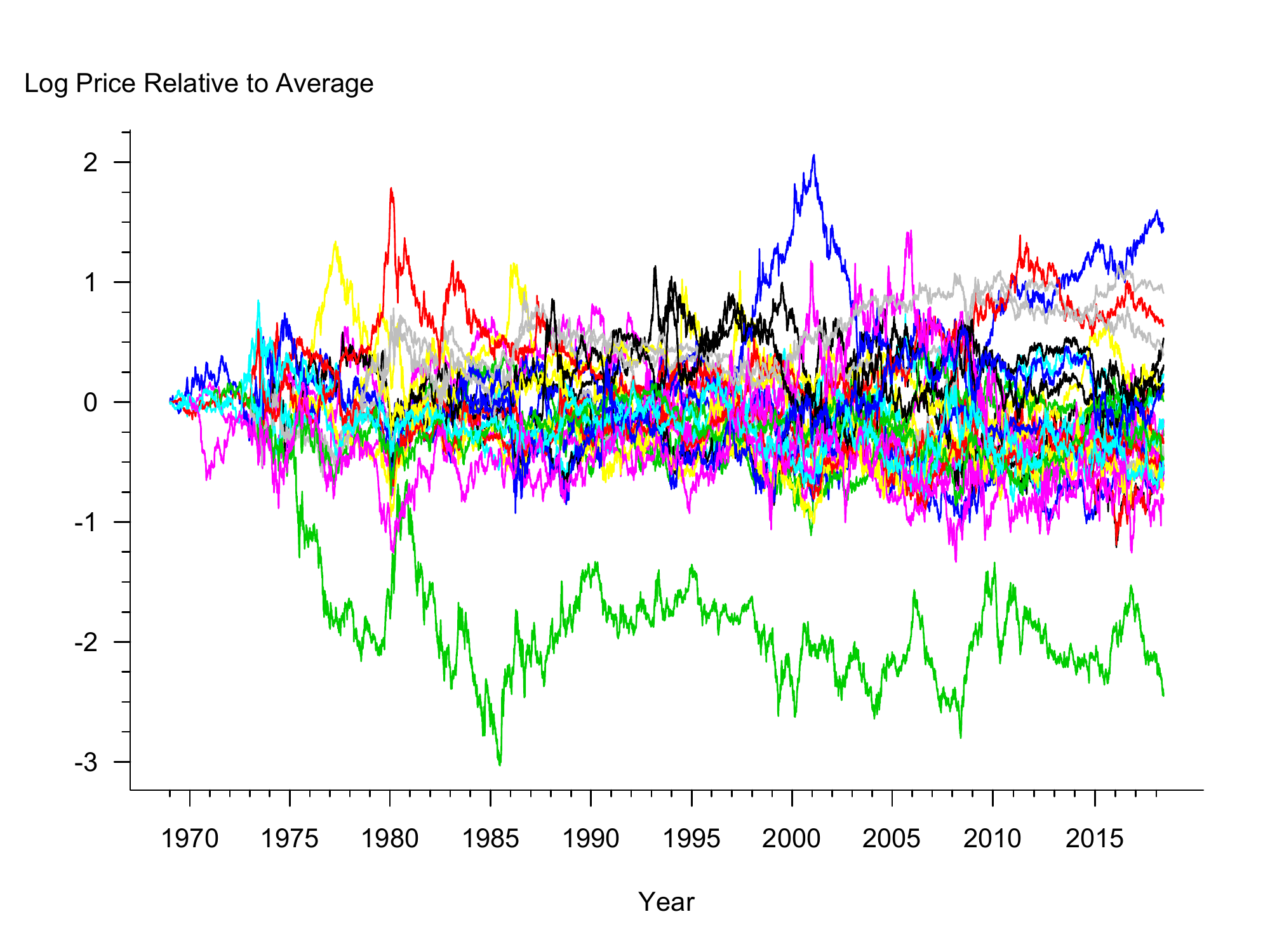}}}
\end{center}
\vspace{-24pt} \caption{Commodity prices relative to the average, 1969-2018.}
\label{relPricesFig}
\end{figure}

\begin{figure}[H]
\begin{center}
\vspace{-4pt}
\hspace{-20pt}\scalebox{0.65}{ {\includegraphics{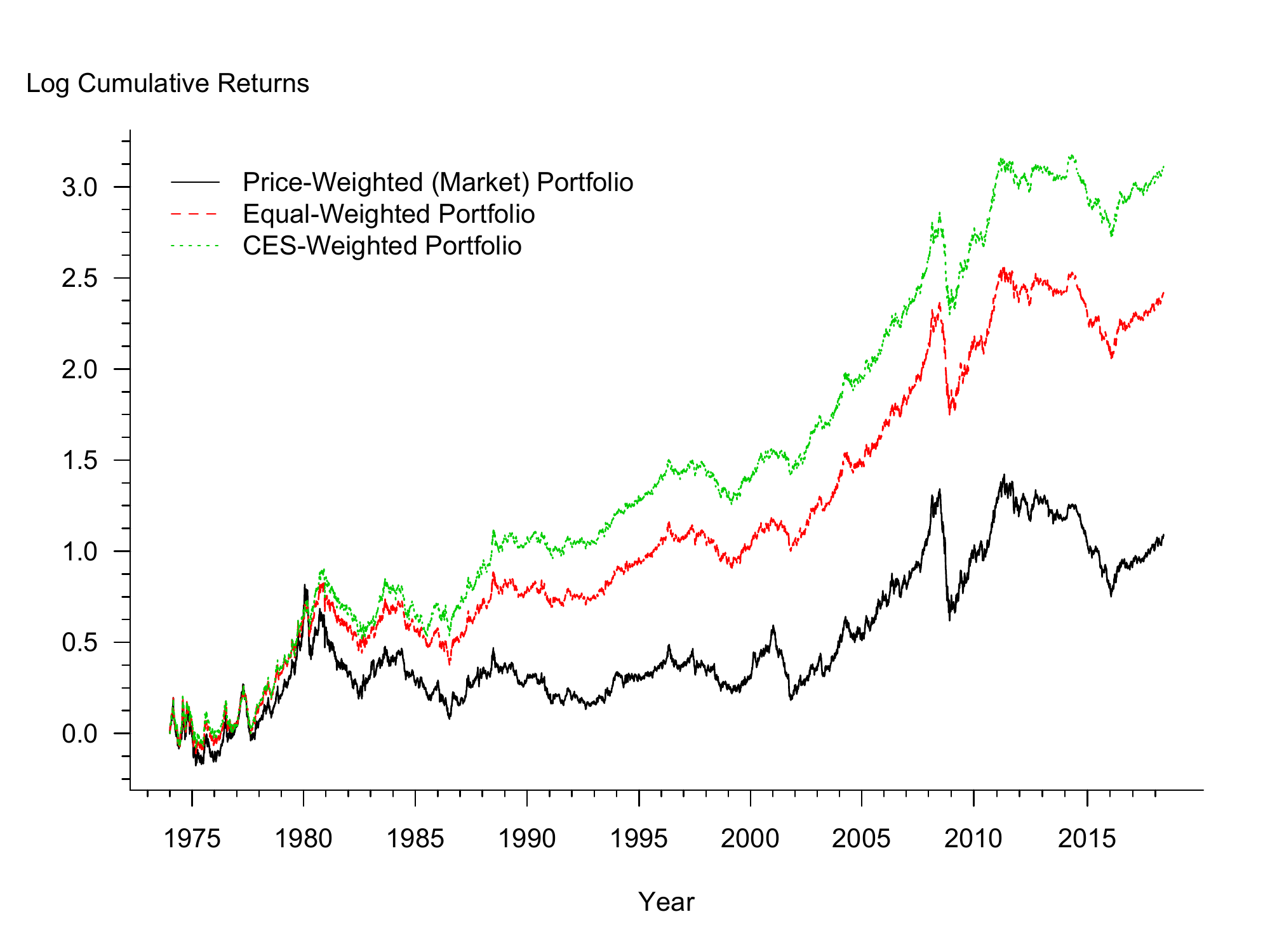}}}
\end{center}
\vspace{-24pt} \caption{Cumulative returns for price-weighted (market) portfolio and equal- and CES-weighted portfolios, 1974-2018.}
\label{returnsFig}
\end{figure}

\begin{figure}[H]
\begin{center}
\vspace{-15pt}
\hspace{-20pt}\scalebox{0.63}{ {\includegraphics{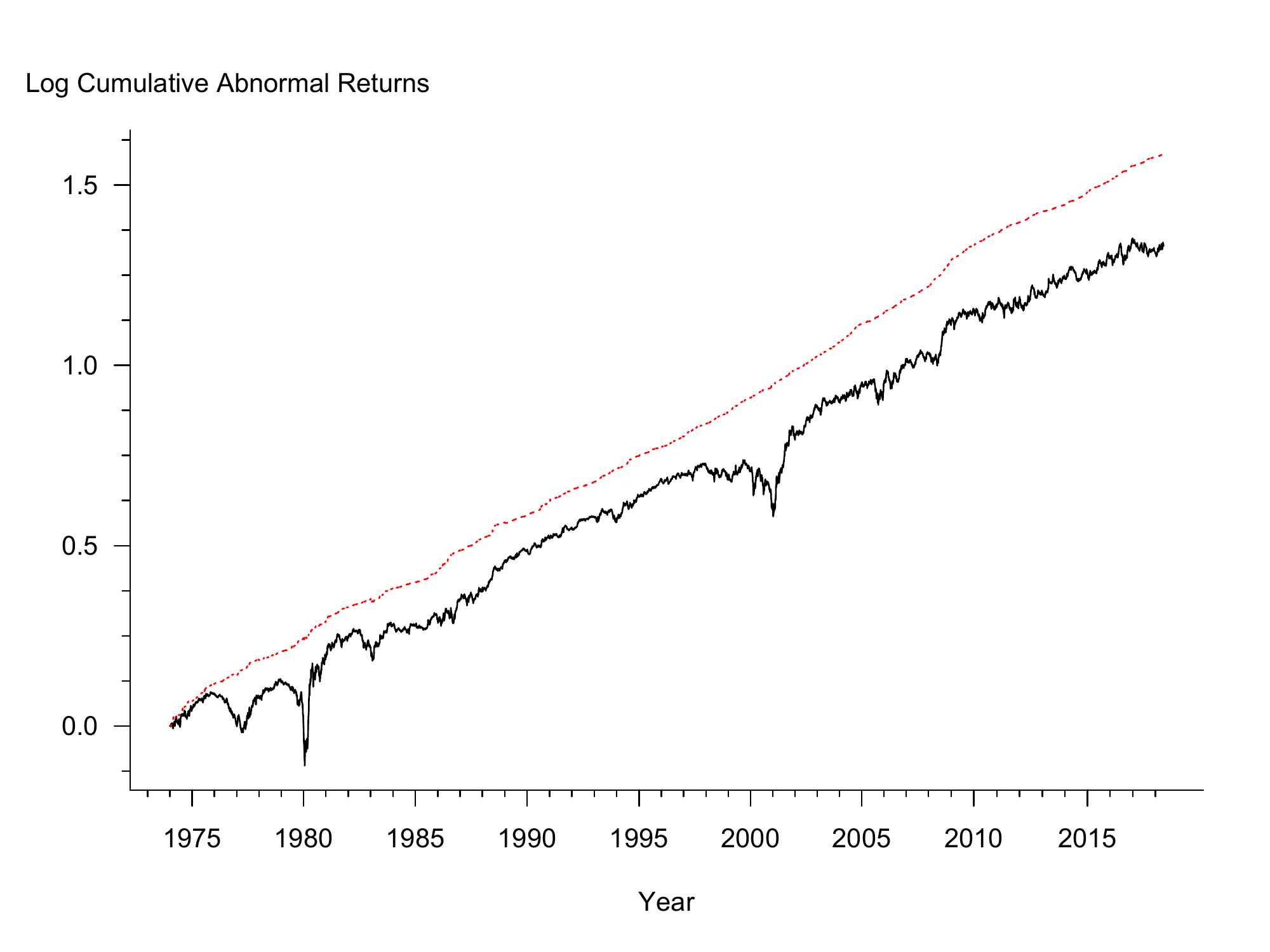}}}
\end{center}
\vspace{-24pt} \caption{Cumulative abnormal returns (solid black line) and adjusted drift (dashed red line) for equal-weighted portfolio, 1974-2018.}
\label{returnsEWFig}
\end{figure}

\begin{figure}[H]
\begin{center}
\vspace{-4pt}
\hspace{-20pt}\scalebox{0.63}{ {\includegraphics{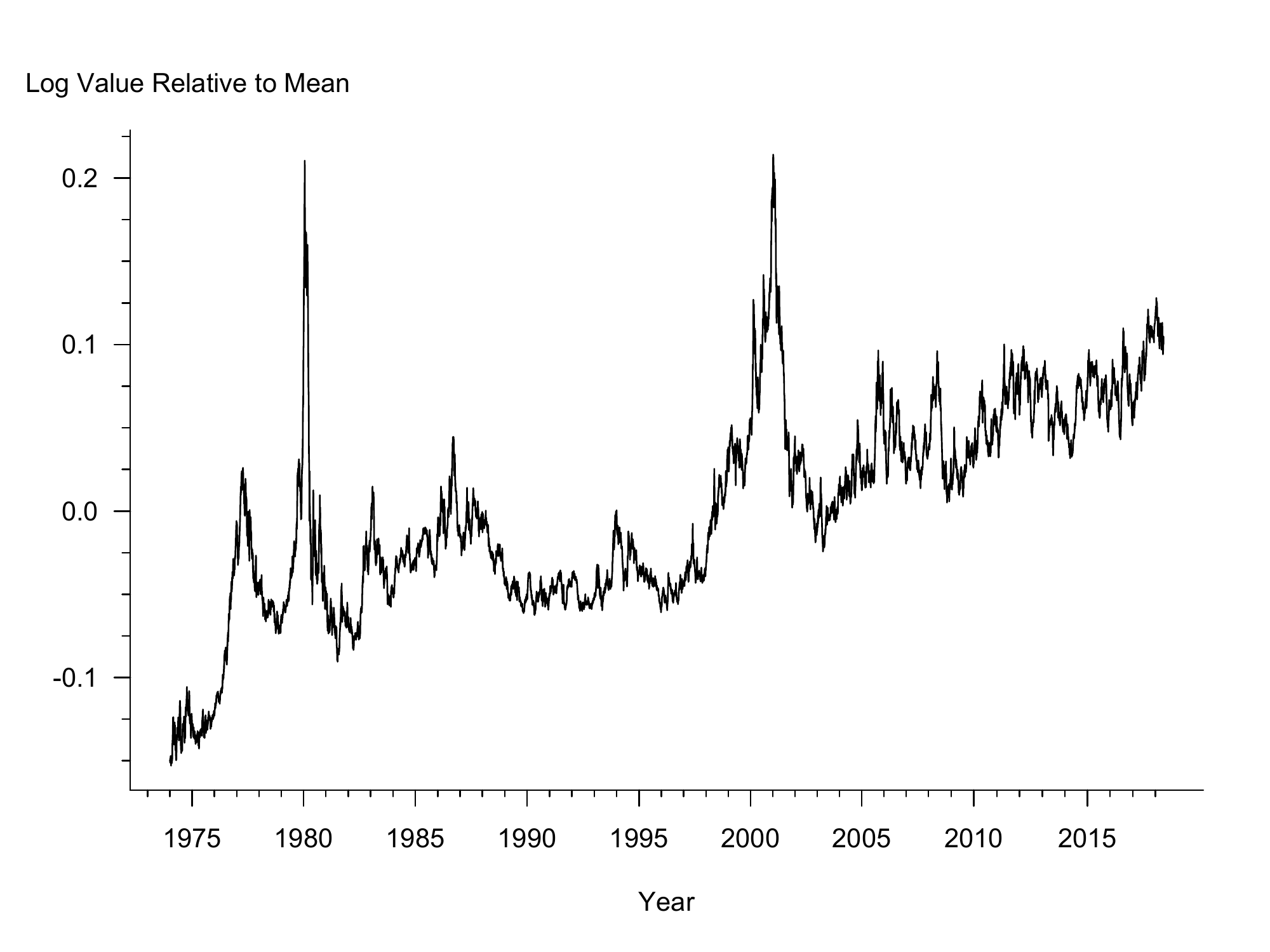}}}
\end{center}
\vspace{-24pt} \caption{Price dispersion for equal-weighted portfolio, measured by minus the geometric mean, 1974-2018.}
\label{dispEWFig}
\end{figure}

\begin{figure}[H]
\begin{center}
\vspace{-15pt}
\hspace{-20pt}\scalebox{0.63}{ {\includegraphics{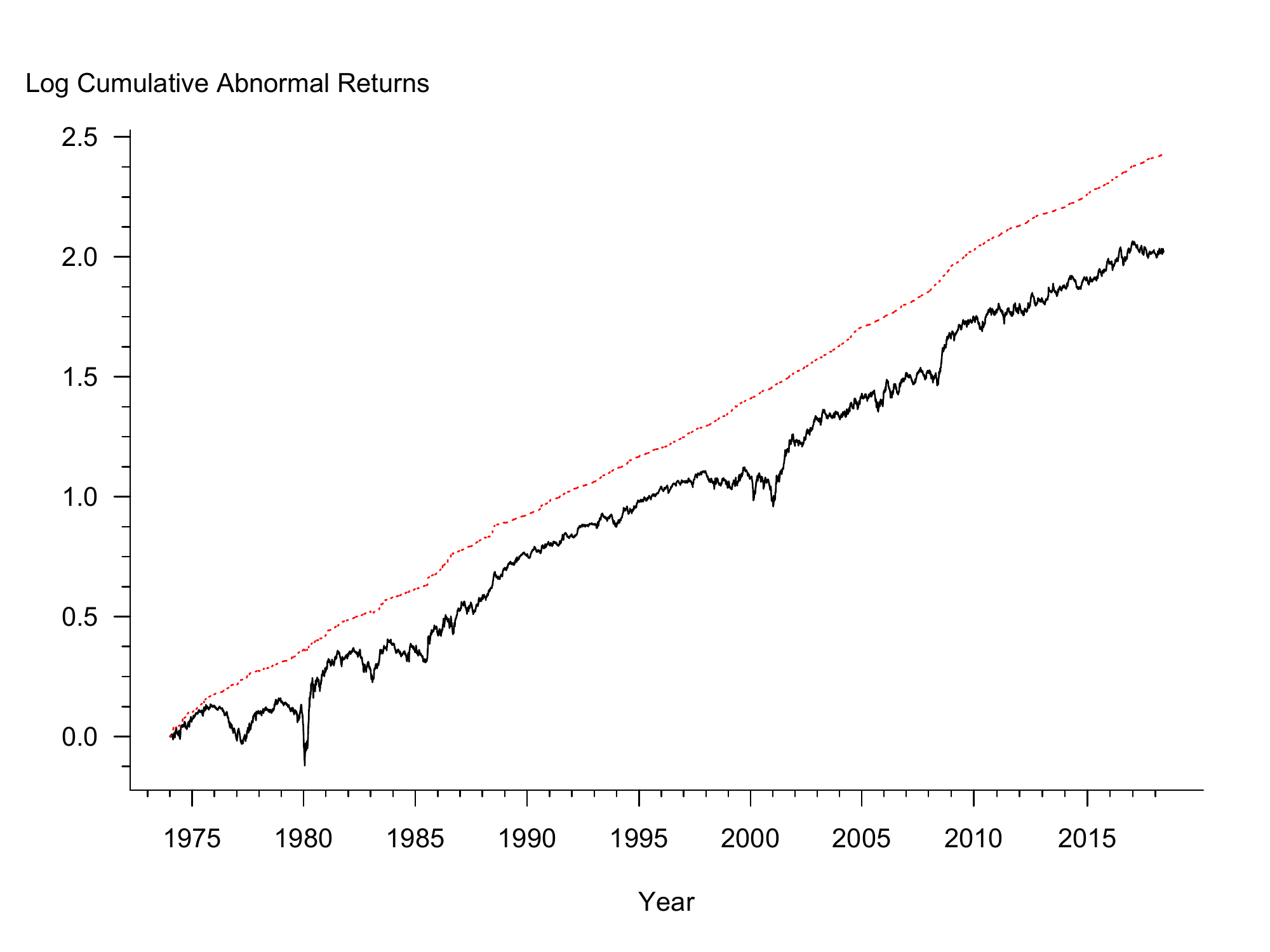}}}
\end{center}
\vspace{-24pt} \caption{Cumulative abnormal returns (solid black line) and adjusted drift (dashed red line) for CES-weighted portfolio, 1974-2018.}
\label{returnsCESFig}
\end{figure}

\begin{figure}[H]
\begin{center}
\vspace{-4pt}
\hspace{-20pt}\scalebox{0.63}{ {\includegraphics{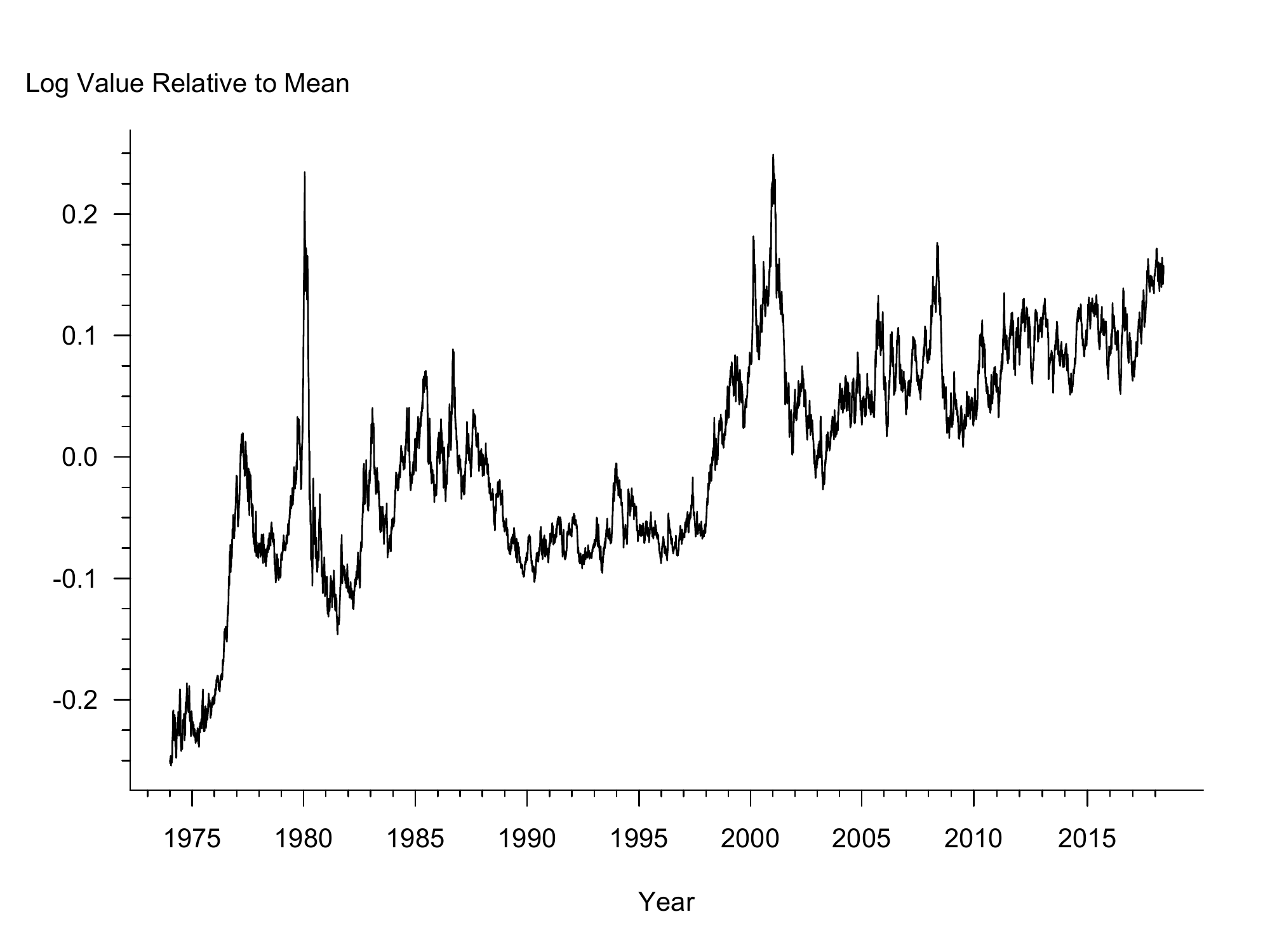}}}
\end{center}
\vspace{-24pt} \caption{Price dispersion for CES-weighted portfolio, measured by minus the CES function, 1974-2018.}
\label{dispCESFig}
\end{figure}

\end{document}